\documentclass[journal,twoside]{IEEEtran}
\usepackage{amsmath,amsfonts,amssymb}
\usepackage{amsthm}
\usepackage{algorithmic}
\usepackage{algorithm}
\usepackage{array}
\usepackage[caption=false,
font=normalsize,
labelfont=sf,
textfont=sf]{subfig}
\usepackage{textcomp}
\usepackage{stfloats}
\usepackage{url}
\usepackage{verbatim}
\usepackage{graphicx}
\usepackage{cite}
\usepackage{bm}
\usepackage{xcolor}
\usepackage[normalem]{ulem}
\usepackage{multirow}
\usepackage{makecell}
\usepackage{arydshln} 
\usepackage{pgfplots}
\newtheoremstyle{italicstyle}
  {}  
  {}   
  {\itshape}  
  {}         
  {\itshape}  
  {.}        
  {.5em}      
  {}         

\theoremstyle{italicstyle}
\newtheorem{remark}{Remark}
\newtheorem{lemma}{Lemma}
\newtheorem{proposition}{Proposition}
\newtheorem{theorem}{Theorem}

\newtheorem{assumption}{Assumption}

\begin{document}

\title{Physics-based Online Adaptive Koopman Model Predictive Attitude Control for Combined Spacecraft with Dynamic Uncertainties}

\author{Yicheng Sun,
    Yueyong Lyu,
    Yuhan Liu,
    Yanning Guo,
    and Wei Pan
    \thanks{Yicheng Sun and Yanning Guo are with Zhengzhou Advanced Research Institute, Harbin Institute of Technology, Zhengzhou 450000, China, and also with the Department of Control Science and Engineering, Harbin Institute of Technology, Harbin 150001, China (e-mail: sunyc@stu.hit.edu.cn; guoyn@hit.edu.cn).}
    \thanks{Yueyong Lyu and Yuhan Liu are with the Department of Control Science and Engineering, Harbin Institute of Technology, Harbin 150001, China (e-mail: lvyy@hit.edu.cn; yhliu@hit.edu.cn).}
    \thanks{Wei Pan is with the School of Engineering, Newcastle University, Newcastle upon Tyne NE1 7RU, United Kingdom (e-mail: wei.pan2@newcastle.ac.uk).}
    }

\maketitle

\begin{abstract}
This paper proposes a physics-based adaptive Koopman Model Predictive Control (MPC) strategy for combined spacecraft attitude stabilization under inertia uncertainties and active target maneuverability. A novel, quaternion-based Koopman model is constructed from a set of analytical lifting functions derived from the quaternion kinematics, 
which provides a more compact and physically interpretable linear representation of the nonlinear dynamics compared with the conventional black-box EDMD and higher-dimensional DCM-based model. Leveraging the linear structure of this nominal model, a gradient descent-based update law is employed to efficiently identify time-varying inertial uncertainties from real-time input/output data.
By integrating this adaptive linear model into the MPC framework, the optimal control problem reduces to a computationally efficient Quadratic Program (QP), thereby significantly lowering the online computational burden compared to nonlinear adaptive MPC.
Recursive feasibility and regional input-to-state stability are formally established through the design of terminal ingredients for the MPC. The effectiveness and superiority of the proposed strategy are validated through comparative simulations of an attitude stabilization task for combined spacecraft in a high-fidelity 3D simulator.
\end{abstract}

\def\abstractname{Note to Practitioners}
\begin{abstract}
On-orbit servicing missions such as refueling and debris removal face a critical challenge: once a spacecraft docks with another target, its inertial properties change abruptly and unpredictably. Furthermore, if the target is non-cooperative, 
it may also apply competitive torques that actively destabilize the combined system. Such unknown physical variations can endanger the entire mission.
Conventional control schemes typically depend on complex nonlinear models of the spacecraft, which impose a heavy computational burden and fail to achieve rapid on-orbit adjustments.
To overcome these limitations, this work introduces a unified, online adaptive strategy built upon Koopman operator theory and Model Predictive Control (MPC). The proposed strategy delivers a significantly faster and more precise control response under inertial uncertainty and competitive torques, as validated in high-fidelity simulations.
Beyond spacecraft applications, the strategy is potentially applicable to a broader class of nonlinear dynamic systems, especially those with structured kinematics and significant parametric uncertainties.
In particular, the proposed physics-enhanced Koopman-based modeling approach can be directly extended to engineering fields such as unmanned aerial vehicle attitude control and robotic manipulator operations.
\end{abstract}

\begin{IEEEkeywords}
Koopman operator, online updating, combined spacecraft, model predictive control (MPC)
\end{IEEEkeywords}

\section{Introduction}
\IEEEPARstart{O}{n}-orbit servicing (OOS) is rapidly emerging as a cornerstone for sustainable space operations, promising revolutionary capabilities such as satellite life extension, active debris removal, and in-space assembly \cite{rousso2021mission}, \cite{zhihui2021review}, \cite{liu2022neural}. 
A critical phase in many of these missions involves the capture of a target. Following a successful capture, the service and target spacecraft form a single combined body, whose attitude must be precisely controlled to stabilize any tumbling motion to ensure safe operations of follow-up missions.
However, a major challenge arises from the uncertainty in the post-capture dynamics, as the inertial properties—such as the mass, center of mass, and moments of inertia—of a non-cooperative target are typically unknown or poorly estimated.
This significant change in the combined spacecraft's parameters renders traditional model-based attitude controllers inadequate and potentially unstable, creating a critical need for robust and adaptive control strategies that can ensure mission safety and success in such uncertain conditions.
\begin{figure*}[htbp]
	\centering
	\includegraphics[scale=1]{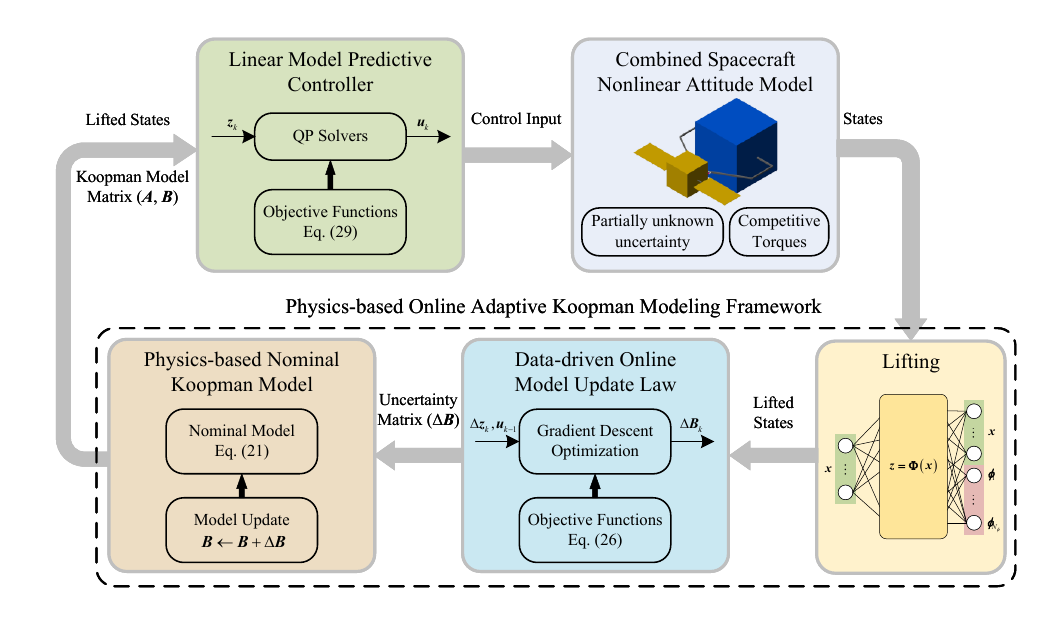}
        \caption{Block diagram of the proposed control framework}
        \label{fig:Blk_framework}
\end{figure*}
In response, a range of advanced model-based strategies have been applied in attitude control and other aerospace applications. 
Jia et al. \cite{jia2025high} applies the HOFA system concept to attitude stabilization for underactuated spacecraft.
In \cite{liu2023observer}, a nonsingular terminal sliding mode controller embedded with a fuzzy logic system is designed for quadrotor Unmanned Aerial Vehicles (UAVs).
\cite{shao2023fault} proposed a barrier Lyapunov functions-based adaptive robust fault-tolerant control (FTC) algorithm for uncertain spacecraft attitude tracking.
Note that a common thread of these strategies is the integration of Disturbance Observer (DO) to compensate for uncertainties, which improves control accuracy but also increases the model dependency of the control strategy and introduces additional computational costs.
\IEEEpubidadjcol
As an advanced model-based control strategy, Model Predictive Control (MPC) \cite{garcia1989model} has emerged as a premier strategy for spacecraft attitude, mainly due to its ability to handle operational constraints, such as actuator saturation and pointing restrictions, while optimizing control performance over a predictive horizon. 
Subsequently, various advanced MPC formulations have been proposed and many of which have been leveraged to enhance attitude control performance under different scenarios.
For instance, stochastic MPC has been employed to manage uncertainties and external disturbances \cite{zhou2025hybrid}, while Tube-based MPC provides a robust framework for handling system constraints \cite{chai2021dual} and enhancing disturbance rejection by being combined with DO \cite{zhang2023robust}. Furthermore, Distributed MPC has emerged as a key enabler for attitude control under multiple optimization objectives \cite{wu2024multi}.
However, the potential of MPC is severely challenged when applied to dynamically complex systems. The most direct approach is Nonlinear MPC (NMPC) \cite{zhou2025hybrid}, \cite{chai2021dual}, \cite{cai2023autonomous}, which uses the full nonlinear model for prediction, but in turn imposes a severe on-board computational burden and makes real-time implementation intractable. Another common approach is to combine Linear MPC (LMPC) with feedback linearization \cite{khodaverdian2023attitude}, \cite{zhao2014quadcopter}, which renders the nonlinear dynamics linear by a nonlinear control input. However, feedback linearization is a fundamentally local technique and relies heavily on a precise physical model for its reliability \cite{gadginmath2024data}.
This creates a dilemma for modern spacecraft attitude control, i.e., how to develop a predictive framework that can capture a system's global nonlinear dynamics without sacrificing real-time computational efficiency.

Originally introduced in 1931 \cite{koopman1931hamiltonian}, the Koopman operator theory offers an alternative perspective for analyzing nonlinear systems. Instead of approximating the system's dynamics, it 'lifts' the state into a higher-dimensional space of observables where the evolution is governed by a linear operator. This powerful transformation enables the application of advanced linear systems theory to analyze nonlinear dynamics.
The linear Koopman operator can be computed directly from the nonlinear dynamics \cite{asada2023global}, \cite{Chen2023}, \cite{zinage2021koopman}, or more commonly, computed from time-series data using data-driven methods like Dynamic Mode Decomposition (DMD) \cite{schmid2010dynamic} and Extended DMD (EDMD) \cite{williams2015data}. 
By approximating nonlinear dynamics with a computationally efficient linear model in a lifted state space, the EDMD-based Koopman framework is naturally suited for MPC \cite{korda2018linear}, \cite{zhang2022robust} and has been successfully applied to complex problems, like robots control \cite{bruder2020data}, and missiles guidance \cite{zhou2024koopman}, and mobile manipulator control \cite{ren2025koopman}.
However, its performance critically depends on the selection of appropriate basis (or lifting) functions. Recent research has moved beyond the selection of traditional polynomials or radial basis functions, shifting towards automated and adaptive methods, such as Deep Neural Networks (DNNs) \cite{xiao2022deep} and Gaussian Processes (GPs) \cite{majumdar2025inverted}, \cite{zhang2026physics}. Furthermore, to address model mismatch or uncertainty, online adaptive Koopman frameworks \cite{singh2025adaptive}, \cite{chen2024learning} have been developed, in which a nominal Koopman model is computed offline and then updated during real-time operation.
Despite this progress, purely data-driven approaches still suffer from a lack of physical interpretability and the curse of dimensionality \cite{schaller2023towards}. Conversely, purely model-based methods, while interpretable, are highly sensitive to model mismatch. This trade-off motivates an adaptive and physics-based data-driven approach. In contrast to generic data-driven liftings, the present work exploits the specific quaternion and rigid-body structure of spacecraft attitude dynamics to construct the lifted model analytically.

In this article, we propose a physics-based adaptive Koopman MPC (denoted by PAKMPC) strategy for combined spacecraft attitude stabilization under dynamic uncertainties, including inertia uncertainties and competitive torque exerted by the target. The block diagram of the overall control framework is shown in Fig. \ref{fig:Blk_framework}.
The main contributions of this article are as follows.
\begin{enumerate}
    \item A novel physics-based Koopman model for combined spacecraft is developed by analytically constructing the lifting functions from the quaternion-based kinematics, rather than selecting them from a generic data-driven dictionary. This yields a compact $(4N_{\phi}+7)$-dimensional representation by incorporating $N_{\phi}$ observable functions, providing superior physical interpretability and reduced structural redundancy over models from data-driven EDMD (up to hundreds of dimensions) \cite{korda2018linear} and traditional DCM-based approaches ($9(N_{\phi}+1)$ dimensions) \cite{Chen2023}.
    \item A new adaptive Koopman-MPC strategy is proposed for combined spacecraft attitude maneuvers under post-capture uncertainty, where inertia mismatch and target-induced disturbances are incorporated as parameter variations within the Koopman model and updated online in a recursive manner. This enables predictive control to be performed within a QP-based formulation, thereby improving computational efficiency compared with the nonlinear adaptive MPC.
    \item Formal guarantees of recursive feasibility and regional input-to-state stability on a compact operating set are established by incorporating a designed terminal cost and constraint set, which provides theoretical support for robust closed-loop operation under online model adaptation.  
    \item The effectiveness and superiority of the proposed strategy are validated in high-fidelity simulations of combined spacecraft attitude stabilization scenarios under inertia uncertainty and competitive target disturbances.
\end{enumerate}

The structure of the article is as follows. Section \ref{Sec_Pre} introduces the attitude dynamics of the combined spacecraft. Section \ref{Sec_Koopman} presents a brief review of the Koopman Operator theory, the construction process of the quaternion-based nominal Koopman model, and the detailed online update law of the Koopman-based model. Section \ref{Sec_Controller} describes the design of online Koopman-based LMPC and proof of recursive feasibility and stability analysis. Section \ref{Sec_Simulation} presents the numerical simulation results. Finally, Section \ref{Sec_Conclusion} concludes the article.

\textbf{Notations}: In this paper, the $n$-dimensional Euclidean space is represented by $\mathbb{R}^n$, and the space of real matrices with $m$ rows and $n$ columns is denoted by $\mathbb{R}^{m \times n}$. $\mathbb{R}_0^{+}$ denotes the set of nonnegative real numbers. $\mathbb{S}^{n \times n}$ is the set of $n \times n$ symmetric matrices. The identity matrix of size $n \times n$ is denoted by $\textbf{I}_n$. $\text{diag}(\mathbf{v})$ denotes a diagonal matrix for a vector $\mathbf{v} = [v_1, v_2, \dots, v_n]^{\top} \in \mathbb{R}^n$. $\text{vec}(\cdot)$ transforms a matrix into a single column vector by stacking its columns vertically. The operator $(\cdot)^{\times}$ constructs a skew-symmetric matrix such that $(\bm{a})^{\times}\bm{b} = \bm{a} \times \bm{b}$ for any vector $\bm{a},\bm{b} \in \mathbb{R}^3$. The set of unit quaternions, which represents rotations in three-dimensional space, is denoted by $\mathbb{Q}^3$. $\text{sgn}(\cdot)$ represents the standard signum function.
\section{Preliminaries} \label{Sec_Pre}
\subsection{Attitude Dynamics of Combined Spacecraft}

The attitude kinematics of the combined spacecraft are described in terms of quaternions by:
\begin{equation} \label{AttKin}
    \dot{\bm{q}} = \bm{\Omega}(\bm{\omega}){\bm{q}}
\end{equation}
where $\bm{q} = \begin{bmatrix} q_0 & \bm{q}_v^{\top} \end{bmatrix}^\top\in \mathbb{Q}^3$ is the unit quaternion representing the orientation of the body frame $\mathcal{F}^{B}$ with respect to the inertial frame $\mathcal{F}^{I}$, $\bm{\omega} =\begin{bmatrix}\omega_1 & \omega_2 & \omega_3 \end{bmatrix}^{\top}\in \mathbb{R}^3$ is the spacecraft angular velocity. The matrix function $\bm\Omega(\bm{\omega}) \in \mathbb{R}^{4 \times 4}$ is defined as
\[
    \bm{\Omega}(\bm{\omega}) = 
    \frac{1}{2}\begin{bmatrix}
    0 & -\omega_1 & -\omega_2 & -\omega_3 \\
     \omega_1 &  0 & \omega_3 & -\omega_2 \\
     \omega_2 & -\omega_3 & 0 & \omega_1  \\
     \omega_3 & \omega_2 & -\omega_1 & 0  \\
    \end{bmatrix}    
\]

The attitude dynamics are governed by the Euler equation for a rigid body:
\begin{equation} \label{AttDyn}
    \bm{J}_{c} \dot{\bm{\omega}} = - \bm{\omega}^{\times}\bm{J}_{c}\bm{\omega} + \bm{\tau} + \bm{\tau}_d
\end{equation}
where $\bm{J}_{c} \in \mathbb{S}^{3 \times 3}$ is the inertia matrix of the combined spacecraft, $\bm{\tau} \in \mathbb{R}^{3}$ is the control torque, $\bm{\tau}_d$ is the external disturbance torque accounting for environmental perturbations and unmodeled effects.

The attitude error $\tilde{\bm{q}} = \begin{bmatrix} \tilde{q}_0 & \tilde{\bm{q}}_v^{\top} \end{bmatrix}^{\top}$ is computed
as $\tilde{{\bm{q}}} = \bigl(\bm{q}^{\text{ref}}\bigr)^{*} \otimes \bm{q}$, where $\bm{q}^{\text{ref}}$ is the desired attitude, $(\cdot)^{*}$ denotes the quaternion conjugate, and "$\otimes$" represents quaternion multiplicatdistion. The angular velocity error is defined as $\tilde{\bm{\omega}} = \bm{\omega} - \bm{C}(\tilde{{\bm{q}}})\bm{\omega}^{\text{ref}}$, where $\bm{\omega}^{\text{ref}}$ is the desired angular velocity and $\bm{C}\left(\tilde{\bm{q}}\right)$ is the rotation matrix corresponding to $\tilde{\bm{q}}$. This paper addresses the attitude stabilization problem, therefore, the desired attitude is $\bm{q}^{\text{ref}}=\begin{bmatrix} 1 & 0 & 0 & 0 \end{bmatrix}^{\top}$ and the desired angular velocity is $\bm{\omega}^{\text{ref}} = \bm{0}$. This simplifies the error quaternion and angular velocity error to $\tilde{{\bm{q}}} = \bm{q}$ and $\tilde{\bm{\omega}} = \bm{\omega}$, respectively.

After capturing a non-cooperative target, the inertia matrix of the combined body, $\bm{J}_{c}$, becomes uncertain. This uncertainty is modeled by representing the inertia matrix as:
\[
\bm{J}_{c} = \bm{J}_{c0} + \Delta\bm{J}_{c}
\]
where $\bm{J}_{c0}$ is the known nominal inertia of the combined spacecraft, and $\Delta\bm{J}_{c}$ represents the unknown inertia deviation due to the captured target. The inverse of $\bm{J}_{c}$ can be obtained as \cite{liu2024attitude}:
\[
\bm{J}_{c}^{-1} = \bm{J}_{c0}^{-1} + \Delta\bm{J}_{c}^{*}
\]
where $\Delta\bm{J}_{c}^{*}=-\left(\mathbf{I}_{3}+\bm{J}_{c0}^{-1} \Delta\bm{J}_{c}\right)^{-1} \bm{J}_{c0}^{-1} \Delta\bm{J}_{c} \bm{J}_{c0}^{-1}$. By substituting this into the Euler equation \eqref{AttDyn}, the attitude dynamics can be rewritten in terms of the nominal inertia, yielding the following error system model for control design:
\begin{subequations} \label{Un_AttDyn}
\begin{align}
\dot{{\bm{q}}} &= \bm{\Omega}({\bm{\omega}}){\bm{q}} \\
\bm{J}_{c0} \dot{{\bm{\omega}}} &= - \bm{\omega}^{\times}\bm{J}_{c0}\bm{\omega} + \bm{\tau} + \bm{d}
\end{align}
\end{subequations}
where the term $\bm{d} = 
-\bm{J}_{c0}\Delta\bm{J}_{c}^{*}\bm{\omega}^{\times}\bm{J}_{c0}\bm{\omega} - \bm{\omega}^{\times}\Delta\bm{J}_{c}^{*}\bm{\omega} - 
\bm{J}_{c0}\Delta\bm{J}_{c}^{*}\bm{\omega}^{\times}\Delta\bm{J}_{c}^{*}\bm{\omega} + \bm{J}_{c0}\Delta\bm{J}_{c}^{*}\bm{\tau} +
\left(\mathbf{I}_{3}+\bm{J}_{c0}\Delta\bm{J}_{c}^{*}\right)\bm{\tau}_d
$
represents the lumped disturbance arising from the inertial uncertainty and external disturbance.
\subsection{Control Objectives}
The primary objective is to robustly stabilize the attitude of the combined spacecraft. In particular, the control objectives are twofold:
\begin{enumerate}
    \item \textbf{Attitude Stabilization:} The controller must ensure that the state error $\begin{bmatrix}\tilde{\bm{q}}_v^{\top}& \tilde{\bm{\omega}}^{\top} \end{bmatrix}^{\top}
    \in\mathbb{R}^6$ is \emph{Uniformly Ultimately Bounded} (UUB), driving it to an arbitrarily small neighborhood of the origin, despite inertia uncertainties and target-induced disturbances.
    \item \textbf{Input Constraints:} The control torque vector $\bm{\tau} = \begin{bmatrix}\tau_1 & \tau_2 & \tau_3 \end{bmatrix}^{\top}$ must satisfy the actuator saturation limits at all times:
    \begin{equation}
        |\tau_i(t)| \leq \tau_{\max}, \quad \forall t \geq 0, \quad i\in\mathbb{I}_1^3
    \end{equation}
\end{enumerate}
where $\tau_{\max}$ is the maximum available torque for each axis.

\section{Koopman-based Model} \label{Sec_Koopman}
To achieve the control objectives defined above, a predictive control strategy is employed, whose performance, however, critically depends on the accuracy and computational efficiency of the prediction model.
To this end, we leverage the Koopman operator theory to transform the nonlinear dynamics into a linear system, which is particularly well-suited for the design of model predictive controllers.
This section is dedicated to developing such a Koopman-based model. First, we provide a brief overview of the Koopman operator theory. Then, we derive a nominal Koopman model for the nonlinear attitude dynamics and subsequently design an online update law to adapt the model to uncertainties and disturbances.
\subsection{Koopman-Operator Theory}
Consider a nonlinear autonomous dynamical system
\begin{equation} \label{sys} 
    \dot{\bm{x}} = \bm{f}(\bm{x})
\end{equation}
where $\bm{x} \in \mathbb{X} \subset \mathbb{R}^n$ denotes the state and $\bm{f}:\mathbb{X} \rightarrow \mathbb{X}$ represents the nonlinear dynamics. Let $\bm{S}_t(\bm{x}_0)$ denote the flow map of the system at time $t$. Define $\mathcal{F}$ as a space of observables, where each observable $\phi:\mathbb{X}\rightarrow\mathbb{C}$ maps the state to a measurement or feature. In this context, then the Koopman operator $\mathcal{K}^t:\mathcal{F}\rightarrow\mathcal{F}$ is defined by
\begin{equation} \label{KO}
    (\mathcal{K}^t \phi)(\bm{x}) = \phi (\bm{S}_t(\bm{x}_0))
\end{equation}

It is important to note that while the Koopman operator is linear, it is inherently infinite-dimensional, acting on the entire space of observables. In practice, however, one seeks to find a finite-dimensional invariant subspace $\mathcal{F}_{n_s}\subseteq\mathcal{F}$, spanned by a chosen set of basis functions
\[
\bm{\Phi}(\bm{x}) = \begin{bmatrix}\phi_1^\top(\bm{x}) & \phi_2^\top(\bm{x}) & \cdots & \phi_{n_s}^\top(\bm{x}) \end{bmatrix}^\top
\]
If the subspace spanned by these functions is Koopman invariant, then a finite-dimensional representation of the system can be obtained, i.e.,
\begin{equation}
    \dot{\bm{\Phi}}(\bm{x}) = \frac{\partial \bm{\Phi}}{\partial \bm{x}}\bm{f}(\bm{x})=\bm{A}\bm{\Phi}(\bm{x})
\end{equation}
with a constant matrix $\bm{A}\in\mathbb{R}^{n_s\times n_s}$.

\subsection{Physics-based Nominal Koopman Model}
Unlike data-driven methods such as EDMD, which often rely on generic polynomial or radial-basis dictionaries and consequently yield high-dimensional black-box liftings, we develop a physics-based model by exploiting the specific structure of spacecraft dynamics. The key insight is twofold: first, Euler's equation is transformed into a linear relationship by redefining the control input. Second, the inherent bilinear structure of the quaternion kinematics allows for a recursive construction of the Koopman model. The resulting model is therefore tailored to combined-spacecraft attitude dynamics, rather than an arbitrary nonlinear system, and provides a compact, physically interpretable, and computationally efficient foundation for control design.

To construct the nominal Koopman model, the inertia uncertainty is temporarily not considered, i.e., $\bm{J}_{c0} = \bm{J}_{c}$. We first define $\bm{u} = - \bm{\omega}^{\times}\bm{J}_c\bm{\omega} + \bm{\tau}$, so that \eqref{AttDyn} can be rewritten as:
\begin{equation} \label{AttDynKO} 
    \dot{{\bm{\omega}}} = \bm{\mathcal{J}}_{c0}{\bm{u}}
\end{equation}
where $\bm{\mathcal{J}}_{c0} = \bigl[\begin{array}{c|c|c} \bm{\mathcal{J}}_{c0,1} & \bm{\mathcal{J}}_{c0,2} & \bm{\mathcal{J}}_{c0,3} \\ \end{array}\bigr] = \bm{J}_{c0}^{-1}$. Next, an observable function is defined as $\bm{\phi}_1 = \dot{{\bm{q}}} = \bm{\Omega}({\bm{\omega}}){{\bm{q}} \in\mathbb{R}^4}$, with dynamics given by
\begin{equation} 
\begin{aligned}
    \dot{\bm{\phi}}_1 &= \bm{\Omega}({\bm{\omega}}){\dot{{\bm{q}}}} + \bm{\Omega}(\dot{{\bm{\omega}}}){{\bm{q}}} \\
    &= \bm{\Omega}^2({\bm{\omega}}){\bm{q}} + \bm{\Omega}(\bm{\mathcal{J}}_{c0}{\bm{u}}){\bm{q}}
\end{aligned}
\end{equation}
Let $\bm{\phi}_2 =\bm{\Omega}({\bm{\omega}})\bm{\phi}_1 = \bm{\Omega}^2({\bm{\omega}}){\bm{q}} \in\mathbb{R}^4$, so that
\begin{equation} 
    \dot{\bm{\phi}}_1 = \bm{\phi}_2  + \bm{\Omega}(\bm{\mathcal{J}}_{c0}{\bm{u}}){{\bm{q}}}
\end{equation}
Similarly, the dynamics of $\bm{\phi}_2$ is given by
\begin{equation} 
\begin{aligned}
     \dot{\bm{\phi}}_2 &= \bm{\Omega}({\bm{\omega}})\dot{\bm{\phi}}_1 + \bm{\Omega}(\bm{\mathcal{J}}_{c0}{\bm{u}})\bm{\phi}_1 \\
     &= \bm{\phi}_3 + \bm{\Omega}(\bm{\mathcal{J}}_{c0}{\bm{u}})\bm{\phi}_1 + \bm{\Omega}({\bm{\omega}})\bm{\Omega}(\bm{\mathcal{J}}_{c0}{\bm{u}}){{\bm{q}}}
\end{aligned}
\end{equation}
where $\bm{\phi}_3 = \bm{\Omega}({\bm{\omega}})\bm{\phi}_2 = \bm{\Omega}^3({\bm{\omega}}){\bm{q}}\in\mathbb{R}^4$. 
More generally, for $k \geq 2$ we define $\bm{\phi}_k = \bm{\Omega}({\bm{\omega}})\bm{\phi}_{k-1}\in\mathbb{R}^4$, with dynamics
\begin{equation} \label{dot_zk}
\dot{\bm{\phi}}_k = \bm{\phi}_{k+1} +  \sum_{i=1 }^{k}\bm{\Omega}^{i-1}({\bm{\omega}})\bm{\Omega}(\bm{\mathcal{J}}_{c0}{\bm{u}})\bm{\Omega}^{k-i}({\bm{\omega}}) {\bm{q}}
\end{equation}
\eqref{dot_zk} can be further rewritten as a control-affine form:
\begin{equation}
    \dot{\bm{\phi}}_k = \bm{\phi}_{k+1} + \bm{b}_k^c{\bm{u}}
\end{equation}
with $\bm{b}^c_k = \bigl[\begin{array}{c|c|c} \bm{b}_k^{c,1} & \bm{b}_k^{c,2} & \bm{b}_k^{c,3} \\ \end{array}\bigr] $, $\bm{b}_k^{c,n} = \sum_{i=1 }^{k}\bm{\Omega}^{i-1}({\bm{\omega}})\bm{\Omega}(\bm{\mathcal{J}}_{c0,n})\bm{\Omega}^{k-i}({\bm{\omega}}) {\bm{q}}$ for $n\in\mathbb{I}_1^3$.
By defining the lifted state vector $\bm{\mathcal{Z}} = \big[ {\bm{\omega}}^{\top} \ {\bm{q}}^{\top} \ \bm{\phi}_1^{\top} \ \bm{\phi}_2^{\top} \ \cdots \big]^{\top}$, the nominal dynamics can be represented in the infinite-dimensional linear form
\begin{equation} \label{dotZ}
    \dot{\bm{\mathcal{Z}}} = \bm{\mathcal{A}}^c\bm{\mathcal{Z}} + \bm{\mathcal{B}}^c{\bm{u}}
\end{equation}
where 
\[
    \bm{\mathcal{A}}^c = \begin{bmatrix}
    \bm{0} &\bm{0} & \bm{0} & \bm{0} & \bm{0} & \cdots \\
    \bm{0} &\bm{0} & \bm{\mathrm{I}}_4 & \bm{0} & \bm{0} & \cdots \\
    \bm{0} &\bm{0} & \bm{0} & \bm{\mathrm{I}}_4 & \bm{0} & \cdots \\
    \bm{0} &\bm{0} & \bm{0} & \bm{0} & \bm{\mathrm{I}}_4 &  \cdots \\
    \vdots & \vdots & \vdots & \vdots &  \vdots &  \ddots
    \end{bmatrix},
    \bm{\mathcal{B}}^c = \begin{bmatrix}
    \bm{\mathcal{J}}_{c0} \\ \bm{0} \\ \bm{b}^c_{1} \\ \bm{b}^c_{2} \\ \vdots
        \end{bmatrix}.
\]

Since \eqref{dotZ} is infinite-dimensional, it is not directly amenable to controller design. Therefore, we truncate the lifted state by retaining the original state variables along with the first $N_{\phi}$ observable variables, forming the finite-dimensional lifted state vector, i.e., $\bm{\Phi}({\bm{x}}) = \big[ {\bm{\omega}} \ {\bm{q}} \ \bm{\phi}_1 \ \bm{\phi}_2 \ \cdots \ \bm{\phi}_{N_{\phi}} \big]$, then \eqref{dotZ} is truncated to
\begin{equation} \label{dotZ_c}
\begin{cases}
   \dot{{\bm{z}}} = \bm{{A}}^c{{\bm{z}}} + \bm{{B}}^c{\bm{u}} \\
   {\bm{x}} = \bm{{C}}^c{ {\bm{z}} }
\end{cases}
\end{equation}
with 
\[
{\bm{z}} = \bm{\Phi}({\bm{x}}) = \begin{bmatrix}
{\bm{\omega}}^{\top} & {\bm{q}}^{\top} & \bm{\phi}_1^{\top} & \bm{\phi}_2^{\top} & \cdots & \bm{\phi}_{N_{\phi}}^{\top}
\end{bmatrix}^{\top},
\]
\[
\bm{{A}}^c = \begin{bmatrix}
    \bm{0} &\bm{0} & \bm{0} & \bm{0} & \cdots & \bm{0} \\
    \bm{0} &\bm{0} & \bm{\mathrm{I}}_4 & \bm{0} & \cdots & \bm{0} \\
    \bm{0} &\bm{0} & \bm{0} & \bm{\mathrm{I}}_4 & \cdots & \bm{0} \\
    \vdots &\vdots & \vdots & \vdots & \ddots &  \vdots \\
    \bm{0} &\bm{0} & \bm{0} & \bm{0} & \cdots &  \bm{\mathrm{I}}_4 \\
    \bm{0} &\bm{0} & \bm{0} & \bm{0} & \cdots &  \bm{0}
    \end{bmatrix},
    \bm{{B}}^c = \begin{bmatrix}
   \bm{\mathcal{J}}_{c0} \\ \bm{0} \\ \bm{b}^c_1 \\ \bm{b}^c_2 \\ \vdots \\ \bm{b}^c_{n}
        \end{bmatrix}, 
\]
and $\bm{{C}}^c=\begin{bmatrix} \textbf{I}_7 & \textbf{0}_{7 \times 4N_{\phi}} \end{bmatrix}$. This yields an approximate $(4N_{\phi}+7)$-dimensional model with 3-dimensional control inputs.

After deriving the finite-dimensional Koopman model, we now turn to its theoretical verification. To provide the necessary justification for using this model in control design, we first establish its fundamental properties, which ensure that key characteristics like stability are preserved from the original nonlinear system.

\begin{assumption} \label{ASS1}
There exists a compact forward-invariant set $\mathbb{X} \subset \mathbb{R}^n$ containing all feasible closed-loop trajectories generated by the admissible initial conditions and bounded control inputs considered in this paper.
\end{assumption}
\begin{assumption} \label{PROP2}
The physics-based lifting function $\bm{\Phi}(\bm{x})$, which is composed of polynomial terms of the state $\bm{x}$, is Lipschitz continuous on the compact set $\mathbb{X}$ defined in Assumption~\ref{ASS1}.
\end{assumption}
With these established properties, we can therefore state the following lemma.
\begin{lemma}
The Lyapunov-stable equilibrium $\bm{x}_e=[0 \ 0 \ 0 \ 1 \ 0 \ 0 \ 0]^{\top}$ of the original system \eqref{Un_AttDyn} is also a Lyapunov-stable equilibrium for the Koopman model \eqref{dotZ_c}, i.e., if
\begin{equation} 
\begin{aligned}
&\exists \eta > 0, \forall \epsilon>0, \exists \delta(\varepsilon) > 0,\\
&\text{s.t.}\ \Vert \bm{x}(0)-\bm{x}_e \Vert < \delta \Rightarrow  \Vert \bm{x}(t)-\bm{x}_e \Vert < \epsilon, \forall t\geq0 \\
\end{aligned}
\end{equation}
Then it holds
\begin{equation} 
\begin{aligned}
&\exists \eta_{\Phi} > 0, \ \forall \epsilon_{\Phi}>0,  \ \exists \delta_{\Phi}(\varepsilon_{\Phi}) > 0,\\
&\text{s.t.}\ \quad\Vert \bm{\Phi}\bigl(\bm{x}(0)\bigr)-\bm{\Phi}(\bm{x}_e) \Vert < \delta_{\Phi} \\
&\ \quad \Rightarrow  \Vert \bm{\Phi}\bigl(\bm{x}(t)\bigr)-\bm{\Phi}(\bm{x}_e) \Vert < \epsilon_{\Phi}, \forall t\geq0 \\
\end{aligned}
\end{equation}
\end{lemma}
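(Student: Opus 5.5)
The plan is to transfer stability through the lifting map $\bm{\Phi}$, using two properties. The first is Lipschitz continuity of $\bm{\Phi}$ on $\mathbb{X}$ (Assumption~\ref{PROP2}), which controls $\|\bm{\Phi}(\bm{x}(t))-\bm{\Phi}(\bm{x}_e)\|$ from above by $\|\bm{x}(t)-\bm{x}_e\|$. The second is the fact that $\bm{\Phi}$ contains the original state as its first seven components, i.e.\ $\bm{x}=\bm{C}^c\bm{\Phi}(\bm{x})$ with $\bm{C}^c=[\mathbf{I}_7\ \mathbf{0}]$. Since $\|\bm{C}^c\|=1$, this gives the reverse bound $\|\bm{x}-\bm{x}_e\|\le\|\bm{\Phi}(\bm{x})-\bm{\Phi}(\bm{x}_e)\|$. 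These two inequalities make $\bm{\Phi}$ a bi-Lipschitz embedding of $\mathbb{X}$ onto its image. Lyapunov stability is preserved under such embeddings, so the $\varepsilon$–$\delta$ statement for $\bm{x}$ translates into one for $\bm{\Phi}(\bm{x})$.

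The steps, in order, are as follows.

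First, fix $L_\Phi>0$ with $\|\bm{\Phi}(\bm{x})-\bm{\Phi}(\bm{y})\|\le L_\Phi\|\bm{x}-\bm{y}\|$ for all $\bm{x},\bm{y}\in\mathbb{X}$. This constant exists because every component is a polynomial in $(\bm{q},\bm{\omega})$ on the compact set $\mathbb{X}$ of Assumption~\ref{ASS1}. Note also that at $\bm{x}_e$ we have $\bm{\omega}=0$, so $\bm{\phi}_k(\bm{x}_e)=0$ for all $k$ and $\bm{\Phi}(\bm{x}_e)=[\bm{x}_e^\top\ \mathbf{0}]^\top$.

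Second, take the radius $\eta$ in the hypothesis and set $\eta_\Phi=\eta$. By the reverse bound, any $\bm{x}$ with $\bm{\Phi}(\bm{x})$ in the $\eta_\Phi$-ball around $\bm{\Phi}(\bm{x}_e)$ satisfies $\|\bm{x}-\bm{x}_e\|<\eta$.

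Third, given $\epsilon_\Phi>0$, choose $\epsilon=\epsilon_\Phi/L_\Phi$ and let $\delta(\epsilon)$ be the value supplied by the hypothesis. Then set $\delta_\Phi=\delta(\epsilon_\Phi/L_\Phi)$. If $\|\bm{\Phi}(\bm{x}(0))-\bm{\Phi}(\bm{x}_e)\|<\delta_\Phi$, the reverse bound gives $\|\bm{x}(0)-\bm{x}_e\|<\delta$. Stability of the original system then gives $\|\bm{x}(t)-\bm{x}_e\|<\epsilon$ for all $t\ge0$. Assumption~\ref{ASS1} keeps $\bm{x}(t)$ in $\mathbb{X}$, so the Lipschitz bound applies and yields $\|\bm{\Phi}(\bm{x}(t))-\bm{\Phi}(\bm{x}_e)\|\le L_\Phi\epsilon=\epsilon_\Phi$. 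To obtain a strict inequality, use $\epsilon=\epsilon_\Phi/(2L_\Phi)$ instead.

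The main obstacle is interpretive rather than technical. The statement concerns the lifted trajectory $\bm{\Phi}(\bm{x}(t))$ of the true system, not the trajectory $\bm{z}(t)$ of the truncated linear model \eqref{dotZ_c}. The truncated model is not exactly Koopman invariant, because $\bm{\phi}_{N_\phi+1}$ is dropped. I would therefore explicitly restrict the claim to $\bm{z}(t)=\bm{\Phi}(\bm{x}(t))$ along true trajectories, which is exactly what the displayed implication asserts, and not attempt to prove stability of the linear system itself. If the truncation error also had to be handled, I would bound the neglected term $\bm{\phi}_{N_\phi+1}=\bm{\Omega}^{N_\phi+1}(\bm{\omega})\bm{q}$ by $(\|\bm{\omega}\|/2)^{N_\phi+1}$ and treat it as a vanishing perturbation near $\bm{\omega}=0$. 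I would not pursue that here.

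A secondary subtlety is the unit-quaternion double cover: $\bm{x}_e$ and its antipode $-\bm{q}$ are both equilibria. Both the Lipschitz bound and the reverse bound remain valid on $\mathbb{X}$, so this does not affect the argument.
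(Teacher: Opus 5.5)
Your argument is correct and is essentially the proof the paper intends. The paper gives no in-text argument and simply defers to Proposition~1 of an external reference, having introduced Assumptions~\ref{ASS1}--\ref{PROP2} to supply the Lipschitz bound $\Vert\bm{\Phi}(\bm{x})-\bm{\Phi}(\bm{x}_e)\Vert\le L_\Phi\Vert\bm{x}-\bm{x}_e\Vert$ that carries the $\epsilon$--$\delta$ statement into the lifted space. Your reverse bound via $\bm{C}^c=[\mathbf{I}_7\ \mathbf{0}]$ pulls the $\delta_\Phi$-ball back to the $\delta$-ball and makes the argument self-contained; the halving of $\epsilon$ is unnecessary, since $\Vert\bm{x}(t)-\bm{x}_e\Vert<\epsilon$ already yields a strict inequality, and your caveat that the result covers only lifted true trajectories, not trajectories of the truncated model \eqref{dotZ_c}, is apt.
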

\begin{IEEEproof} 
See \cite[Pro. 1]{10091950} for the proof.
\end{IEEEproof}

\begin{lemma}
The linear Koopman system \eqref{dotZ_c} is controllable.
\end{lemma}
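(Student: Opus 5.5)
The plan is to verify controllability of the pair $(\bm{A}^c,\bm{B}^c)$ in \eqref{dotZ_c} with the Popov--Belevitch--Hautus (PBH) test. The matrix $\bm{B}^c$ is treated as evaluated at the current operating point $(\bm{\omega},\bm{q})\in\mathbb{X}$, in the same way the model is used at each MPC sampling instant. The structure of $\bm{A}^c$ makes PBH much more convenient than building the full Kalman matrix $[\bm{B}^c\ \bm{A}^c\bm{B}^c\ \cdots]$ of size $(4N_\phi+7)\times 3(4N_\phi+7)$.

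\textbf{Step 1 (spectrum).} $\bm{A}^c$ is block strictly upper triangular: the identity blocks sit on a super-diagonal and the $\bm{\omega}$-block row is zero. Hence $\bm{A}^c$ is nilpotent and $\lambda=0$ is its only eigenvalue. The PBH condition therefore reduces to the single requirement
\begin{equation*}
\operatorname{rank}\begin{bmatrix}\bm{A}^c & \bm{B}^c\end{bmatrix}=4N_\phi+7 .
\end{equation*}

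\textbf{Step 2 (left kernel of $\bm{A}^c$).} Rows of the form $[\,\bm{0}\ \cdots\ \bm{\mathrm{I}}_4\ \cdots\,]$ give $\operatorname{rank}\bm{A}^c=4N_\phi$. The left null space is spanned by the row vectors $\bm{w}=[\bm{w}_\omega^\top\ \bm{0}\ \cdots\ \bm{0}\ \bm{w}_N^\top]$, with $\bm{w}_\omega\in\mathbb{R}^3$ supported on the $\bm{\omega}$ rows and $\bm{w}_N\in\mathbb{R}^4$ supported on the $\bm{\phi}_{N_\phi}$ rows. PBH then requires
\begin{equation*}
\bm{w}\bm{B}^c=\bm{w}_\omega^\top\bm{\mathcal{J}}_{c0}+\bm{w}_N^\top\bm{b}^c_{N_\phi}\neq\bm{0}
\quad\text{for every nonzero }\bm{w}.
\end{equation*}
Equivalently, the stacked $7\times 3$ matrix $\begin{bmatrix}\bm{\mathcal{J}}_{c0}\\ \bm{b}^c_{N_\phi}\end{bmatrix}$ must have trivial left kernel. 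To test this, I would use that $\bm{\mathcal{J}}_{c0}=\bm{J}_{c0}^{-1}$ is invertible, and expand $\bm{b}^c_{N_\phi}$ via $\bm{\Omega}(\bm{a})\bm{\Omega}(\bm{b})=-\tfrac14(\bm{a}^\top\bm{b})\mathbf{I}_4+\tfrac12\bm{\Omega}(\bm{a}\times\bm{b})$-type quaternion identities. This gives $\bm{b}^c_{N_\phi}$ explicitly in terms of $\bm{q}$, $\|\bm{\omega}\|$ and $\bm{\mathcal{J}}_{c0}$.

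\textbf{Main obstacle.} Step 2 is where I expect the argument to break. A $7\times 3$ matrix always has a left kernel of dimension at least $4$. So with a frozen $\bm{B}^c$, the literal PBH condition cannot be satisfied, and this step cannot simply be closed by computation.

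The first repair I would try is to exploit the state dependence of $\bm{b}^c_k$. I would argue along trajectories in the compact set $\mathbb{X}$ of Assumption~\ref{ASS1}, using the Lipschitz property of Assumption~\ref{PROP2}. The aim is to show that the lifted state generated by $\bm{\Phi}$ is steerable, since its components are not independent: $\bm{\phi}_k=\bm{\Omega}^k(\bm{\omega})\bm{q}$. This means showing that any two points of $\bm{\Phi}(\mathbb{X})$ can be connected, which reduces to controllability of the original attitude system on $\mathbb{Q}^3\times\mathbb{R}^3$. That property holds because $\bm{\omega}$ is fully actuated through the invertible $\bm{\mathcal{J}}_{c0}$, and the kinematics $\dot{\bm{q}}=\bm{\Omega}(\bm{\omega})\bm{q}$ are controllable via $\bm{\omega}$ as a virtual input.

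I would present this lifted-manifold reachability as the content of the lemma. I would also state explicitly that the PBH rank test for the frozen matrices fails in the full $(4N_\phi+7)$-dimensional space.
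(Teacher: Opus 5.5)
Your Steps 1--2 are correct, and the obstruction you flag is decisive, not a technicality. The $\bm\omega$ block row and block column of $\bm A^c$ vanish, and $\bm A^c$ acts as a pure shift on the chain $\bm\phi_{N_\phi}\to\cdots\to\bm\phi_1\to\bm q$, so $(\bm A^c)^{N_\phi+1}=\bm 0$. Hence the Kalman matrix $[\bm B^c\ \bm A^c\bm B^c\ \cdots]$ has at most $3(N_\phi+1)<4N_\phi+7$ nonzero columns; equivalently, $\operatorname{rank}[\bm A^c\ \bm B^c]\le 4N_\phi+3$. So for every frozen $(\bm q,\bm\omega)$ the pair $(\bm A^c,\bm B^c)$ is uncontrollable, and the lemma read as an LTI statement is false; no route can prove it.

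The paper's proof does not escape this. It sets $\bm B^c\bm u=\bar{\bm B}^c\bm v$ with a selection matrix $\bar{\bm B}^c$ and a virtual input $\bm v\in\mathbb R^{4N_\phi+3}$. It then shows, correctly, that $(\bm A^c,\bar{\bm B}^c)$ is controllable, and transfers this back through the claimed identity $\operatorname{rank}[\bm B^c\ \bar{\bm B}^c]=\operatorname{rank}\bm B^c$. That identity fails for dimensional reasons: the left side is $\ge 4N_\phi+3$ and the right side is $\le 3$. Writing $\bm B^c=\bar{\bm B}^c\bm M$ with $\bm M=\begin{bmatrix}\bm{\mathcal J}_{c0}^{\top} & (\bm b^c_1)^{\top} & \cdots & (\bm b^c_{N_\phi})^{\top}\end{bmatrix}^{\top}$, the realizable inputs $\bm v=\bm M\bm u$ lie in a subspace of dimension at most $3$, so $\bm v\mapsto\bm u$ is not onto. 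Your PBH computation is exactly why that step breaks, so do not fall back on the paper's argument.

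The genuine gap in your proposal is the repair. Controllability of the original attitude system on $\mathbb Q^3\times\mathbb R^3$ is true, since it is fully actuated, but it concerns $(\bm q,\bm\omega)$. Nothing in your plan connects it to the truncated model \eqref{dotZ_c}. In \eqref{dotZ_c}, $\bm q$ is driven by the independent coordinate $\bm\phi_1$ rather than by $\bm\Omega(\bm\omega)\bm q$. Also, $\dot{\bm\phi}_{N_\phi}=\bm b^c_{N_\phi}\bm u$ has lost the term $\bm\phi_{N_\phi+1}$. Consequently, solutions of \eqref{dotZ_c} leave $\bm\Phi(\mathbb X)$ immediately, and the $\bm\Phi$-image of a nonlinear trajectory from $\bm x_0$ to $\bm x_1$ is not a solution of \eqref{dotZ_c}. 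The Lipschitz property of Assumption~\ref{PROP2} does not supply this link. Hence ``reachability on the lifted manifold'' is neither what the lemma asserts nor something established for the model the lemma is about.

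A correct write-up should state plainly that the lemma fails in the LTI sense, with the rank bound above as the certificate. It should then prove a precisely stated substitute, for example one of the following:
\begin{enumerate}
\item controllability of $(\bm A^c,\bar{\bm B}^c)$ with respect to the virtual input, making explicit that $\bm v$ is confined to the range of $\bm M$;
\item controllability of \eqref{Un_AttDyn} itself;
\item if you want to exploit the state dependence of $\bm b^c_k$, a time-varying or Lie-algebraic accessibility analysis of $\dot{\bm z}=\bm A^c\bm z+\bm B^c(\bm C^c\bm z)\bm u$, which would have to be carried out separately.
\end{enumerate}
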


\begin{IEEEproof}
Inspired by \cite[Ch. 9]{Chen2023}, we introduce a new control input variable ${\bm{v}}$ and define it via the relation
\begin{equation} \label{v}
\bm{{B}}^c({\bm{q}}, {\bm{\omega}}){\bm{u}} = \bar{\bm{B}}^c{\bm{v}}
\end{equation}
where
\[
\bm{v} =  \begin{bmatrix}\bm{v}_1 \\ \bm{v}_2 \\ \vdots \\ \bm{v}_{n+1}\end{bmatrix} 
= \begin{bmatrix}
    \bm{\mathcal{J}}_{c0}\bm{u} \\
    \bm{b}^c_1\bm{u} \\
    \vdots \\
   \bm{b}^c_n\bm{u}
\end{bmatrix}, \quad
\bar{\bm{B}}^c = \begin{bmatrix}
    \begin{array}{cc} \textbf{I}_{3} &  \textbf{0}_{3 \times 4N_{\phi}} \end{array} \\
   \textbf{0}_{4 \times (4N_{\phi}+3)} \\
     \begin{array}{cc}  \textbf{0}_{4N_{\phi} \times 3} &  \textbf{I}_{4N_{\phi}} \end{array}
\end{bmatrix}.
\]
With this transformation, system \eqref{dotZ_c} can be described as a linear time-invariant (LTI) form:
\begin{equation} \label{dotZ_c_LTI}
    \dot{{\bm{z}}} = \bm{{A}}^c{\bm{z}} + \bar{\bm{B}}^c{\bm{v}}
\end{equation}

It's important to note that
\[
\operatorname{rank}\!\left[\,\bm{B}^c \quad \bar{\bm{B}}^c\,\right] = \operatorname{rank}\left(\bm{B}^c\right)
\]
holds for all admissible ${\bm{q}}$ and ${\bm{\omega}}$, which guarantees that for every admissible $\bm{v}$ there exists at least one $\bm{u}$ satisfying \eqref{v},
thus the mapping $\bm{v} \mapsto \bm{u}$ is surjective, ensuring that any control input $\bm{v}$ can be implemented via an appropriate choice of $\bm{u}$ in the original system.

Moreover, it can be computed that the controllability matrix $\bm{M}_c = \left [ \bar{\bm{B}}^c \  \bm{A}^c\bar{\bm{B}}^c \ \dots \ (\bm{A}^c)^{4N_{\phi}+7}\bar{\bm{B}}^c \right ]$ is full-column rank, which implies that system \eqref{dotZ_c_LTI} is controllable. Since the mapping $\bm{v} \mapsto \bm{u}$ is onto, the controllability of the transformed LTI system directly ensures the controllability of the original system \eqref{dotZ_c}. For brevity, the detailed computation of the rank of \(\bm{M}_c\) is omitted.
\end{IEEEproof}

For the subsequent design of the MPC strategy, it is necessary to convert system \eqref{dotZ_c} into a discrete-time model.
Assuming the control input is generated by an ideal zero-order hold (ZOH), i.e., $\bm{u}(t):=\bm{u}_k$ for $t \in \left [ kT_s,(k+1)T_s \right ]$.
The existence of a unique discrete-time model is guaranteed by the following proposition.
\begin{proposition}
The system \eqref{dotZ_c} under ZOH admits a unique, exact discrete-time map $\bm{F}_{T_s}: \mathbb{R}^{4N_{\phi}+7} \times \mathbb{R}^3 \to \mathbb{R}^{4N_{\phi}+7}$ such that $\bm{z}_{k+1} = \bm{F}_{T_s}(\bm{z}_k, \bm{u}_k)$.
\end{proposition}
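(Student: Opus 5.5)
The plan is to treat \eqref{dotZ_c} under ZOH as a finite-dimensional ordinary differential equation on each sampling interval. Along the interval the input is frozen, and we then invoke the standard existence--uniqueness theorem (Picard--Lindel\"of). Fix $k$ and write $\bm{u}(t)\equiv\bm{u}_k$ for $t\in[kT_s,(k+1)T_s)$. The right-hand side is then
\[
\bm{g}(\bm{z})=\bm{A}^c\bm{z}+\bm{B}^c(\bm{z})\bm{u}_k .
\]
Here $\bm{A}^c$ is constant. Every entry of $\bm{B}^c$ is a polynomial in $(\bm{q},\bm{\omega})$, and these are components of $\bm{z}$ selected by $\bm{C}^c$: the blocks $\bm{b}^c_k$ are sums of products of $\bm{\Omega}(\bm{\omega})$, the constant matrices $\bm{\Omega}(\bm{\mathcal{J}}_{c0,n})$, and $\bm{q}$. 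Hence $\bm{g}$ is a polynomial, and therefore $C^\infty$, vector field on $\mathbb{R}^{4N_{\phi}+7}$. In particular it is locally Lipschitz in $\bm{z}$, uniformly for $\bm{u}_k$ in any bounded set; the admissible inputs are bounded by the torque limits and by Assumption~\ref{ASS1}.

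The first step is local existence and uniqueness. By Picard--Lindel\"of, for each initial value $\bm{z}(kT_s)=\bm{z}_k$ there is a unique maximal solution $\bm{\psi}(t;\bm{z}_k,\bm{u}_k)$. The second step is to show that this solution exists on the whole interval $[kT_s,(k+1)T_s]$, i.e., it does not escape in finite time. This is the main obstacle, because a polynomial vector field can blow up. I would handle it with the triangular structure of \eqref{dotZ_c}. First, $\dot{\bm{\omega}}=\bm{\mathcal{J}}_{c0}\bm{u}_k$ is affine in $t$, so $\bm{\omega}(t)$ is explicit and bounded on the interval. The $\bm{q}$-block is $\dot{\bm{q}}=\bm{\phi}_1$. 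The last block $\bm{\phi}_{N_\phi}$ is driven only by $\bm{b}^c_{N_\phi}(\bm{q},\bm{\omega})\bm{u}_k$. Given the bounded $\bm{\omega}(t)$, the remaining subsystem in $(\bm{q},\bm{\phi}_1,\dots,\bm{\phi}_{N_\phi})$ is linear with time-varying bounded coefficients, because each $\bm{b}^c_k\bm{u}_k$ is linear in $\bm{q}$ once $\bm{\omega}$ is fixed. Linear ODEs with continuous coefficients have global solutions by Gr\"onwall's inequality. Hence the solution exists on the full sampling interval for every $\bm{z}_k$ and every $\bm{u}_k$, which also covers the case where the state leaves $\mathbb{X}$. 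As a fallback, restricting to the compact forward-invariant set of Assumption~\ref{ASS1} together with the Lipschitz property of Assumption~\ref{PROP2} gives the same conclusion on $\mathbb{X}$.

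Finally, I would define $\bm{F}_{T_s}(\bm{z}_k,\bm{u}_k):=\bm{\psi}((k+1)T_s;\bm{z}_k,\bm{u}_k)$. The system is autonomous once the input is frozen, so this map does not depend on $k$. It is well-defined by the second step and unique by the first step. Continuous (indeed smooth) dependence on initial data and parameters shows it is a legitimate discrete-time map. Under the LTI reformulation \eqref{dotZ_c_LTI}, it can also be written explicitly as $e^{\bm{A}^cT_s}\bm{z}_k+\int_0^{T_s}e^{\bm{A}^c s}\,\mathrm{d}s\,\bar{\bm{B}}^c\bm{v}_k$, provided $\bm{v}$ is held constant. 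I would note this only as a remark: ZOH on $\bm{u}$ does not imply ZOH on $\bm{v}$, so the proof itself rests on the ODE argument above.
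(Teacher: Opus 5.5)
Your proof is correct. It follows the paper's outline: apply Picard--Lindel\"of on each sampling interval with $\bm{u}$ frozen, then define $\bm{F}_{T_s}$ as the time-$T_s$ flow. It differs in the one step with real content, ruling out finite-time escape.

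The paper handles that step with a physical assertion: ``for physical systems such as spacecraft, solutions do not exhibit finite-time escape under bounded inputs.'' That statement concerns the true attitude dynamics, not the truncated lifted ODE \eqref{dotZ_c}. Because truncation drops $\bm{\phi}_{N_\phi+1}$, trajectories of \eqref{dotZ_c} need not coincide with lifted physical trajectories.

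You instead prove no-escape from the cascade structure of \eqref{dotZ_c}:
\begin{itemize}
\item With $\bm{u}_k$ held constant, $\dot{\bm{\omega}}$ is constant and $\bm{\omega}(t)$ is affine in $t$. You wrote that $\dot{\bm{\omega}}$ is affine, which is a harmless slip.
\item Each $\bm{b}^c_j\bm{u}_k$ is linear in $\bm{q}$ once $\bm{\omega}$ is given.
\item So the block $(\bm{q},\bm{\phi}_1,\dots,\bm{\phi}_{N_\phi})$ satisfies a linear time-varying ODE with continuous coefficients, and Gr\"onwall gives existence on all of $[kT_s,(k+1)T_s]$.
\end{itemize}

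This makes the result self-contained. The map is defined on the whole domain $\mathbb{R}^{4N_\phi+7}\times\mathbb{R}^3$ claimed in the statement, without needing Assumption~\ref{ASS1} or bounded inputs. The paper's justification is heuristic and implicitly limited to physically meaningful states and bounded inputs.

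Your closing caveat is also apt. Since $\bm{B}^c$ depends on the state, ZOH on $\bm{u}$ is not ZOH on $\bm{v}$. Hence the exact map $\bm{F}_{T_s}$ generally differs from the frozen-$\bm{B}^c$ linear model \eqref{koop_dis} that the paper uses afterwards.
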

\begin{IEEEproof}
The system matrices $\bm{A}^c$ and $\bm{C}^c$ are constant, and $\bm{B}^c$ is locally Lipschitz continuous.
By the Picard–Lindelöf theorem \cite[Thm. 3.1]{khalil2002nonlinear}, this guarantees the local existence and uniqueness of a solution to \eqref{dotZ_c} starting from the initial time $t = kT_s$. For physical systems such as spacecraft, solutions do not exhibit finite-time escape under bounded inputs, ensuring the solution exists over the entire interval $[kT_s, (k+1)T_s]$. The state at time $(k+1)T_s$ is thus a unique function of the initial state $\bm{z}_k$ and input $\bm{u}_k$.
\end{IEEEproof}
For practical implementation, the continuous-time system can be discretized as:
\begin{equation} \label{koop_dis}
\begin{cases}
   {\bm{z}}_{k+1} = \bm{A}{\bm{z}}_{k} + \bm{{B}}\bm{u}_k \\
   {\bm{x}}_{k} = \bm{{C}}{\bm{z}}_{k}
\end{cases}
\end{equation}
where $\bm{{A}} =e^{\bm{{A}}^c{T_s}}$, $\bm{{B}} = \left(\int_{0}^{T_s} e^{\bm{A}^c \tau}\, \mathrm{d}\tau\right)\bm{B}^c$, $\bm{{C}} = \bm{{C}}^c$, $T_s\in\mathbb{R}_0^{+}$ denotes the sampling time.

\begin{remark}
From a theoretical perspective, the proposed Koopman modeling method is more efficient in terms of on-board computational resources compared to existing alternatives. 
Specifically, for the same number of included dynamic terms $\left( N_{\phi} \right)$, our model is $5N_{\phi}+2$ dimensions more compact than methods that construct lifting functions from DCM dynamics (e.g., using $\dot{\bm{R}} = \bm{R}\bm{\omega}^{\times}$) \cite{Chen2023}.
Similarly, data-driven EDMD methods \cite{korda2018linear} often rely on a large dictionary of basis functions, where achieving comparable accuracy can require the model's dimension to be in the tens or even hundreds. The subsequent simulation results will further demonstrate this computational advantage.
\end{remark}

\subsection{Recursive Online Update Law}

Although the offline nominal model guarantees a global approximation of the attitude dynamics, model mismatch due to a lack of prior knowledge of inertia or external noise remains inevitable in practical scenarios. As will be demonstrated in (Section~\ref{Section-V-C}), these mismatches lead to significant performance degradation if left uncompensated. To compensate for such mismatches, an online adaptive Koopman method is employed. This method recursively updates the nominal model by minimizing the prediction error between the actual lifted state and the predicted lifted state.

Consider a control-affine system with uncertainty:
\begin{equation} 
    \dot{\bm{x}} = \bm{f}(\bm{x}) + \tilde{\bm{f}}(\bm{x}) + \bigl(\bm{g}(\bm{x}) + \tilde{\bm{g}}(\bm{x})\bigr)\bm{u}
\end{equation}
where ${\bm{f}}$ and ${\bm{g}}$ denote the unknown or uncertain dynamics. Assume that $\frac{\partial \bm{\Phi}}{\partial \bm{x}}{\bm{f}}(\bm{x})\in \text{span}\left \lbrace \phi_1,\phi_2,\cdots,\phi_{n_s} \right \rbrace$ and $\frac{\partial \bm{\Phi}}{\partial \bm{x}}{\bm{g}}(\bm{x})\in \text{span}\left \lbrace \phi_1,\phi_2,\cdots,\phi_{n_s} \right \rbrace$, the lifted representation can be obtained as:
\begin{equation} 
\dot{\bm{\Phi}}(\bm{x}) = (\bm{A} + \Delta\bm{A})\bm{\Phi}(\bm{x}) + (\bm{B}(\bm{x}) + \Delta\bm{B}(\bm{x}))\bm{u}
\end{equation}
where
\[
\Delta\bm{A}\bm{\Phi}(\bm{x})=\frac{\partial \bm{\Phi}}{\partial \bm{x}}{\bm{f}}(\bm{x}) \qquad
\Delta\bm{B}(\bm{x})=\frac{\partial \bm{\Phi}}{\partial \bm{x}}{\bm{g}}(\bm{x})
\]

Based on the previous section, the discrete-time Koopman representation of \eqref{Un_AttDyn} can be obtained as
\begin{equation} \label{un_koop} 
    \bm{z}_{k+1} = \bm{A}\bm{z}_{k} + \bigl(\bm{B} + \Delta\bm{B}\bigr){\bm{u}}_{k}
\end{equation}
Here, $\bm{z}_{k+1}$ is obtained by lifting the true state, i.e., $\bm{z}_{k+1}=\bm{\Phi}(\bm{x}_{k+1})$, while the nominal model \eqref{koop_dis} produces a state estimate $\hat{\bm{z}}_{k+1}$.

By defining the prediction error $\Delta \bm{z}_k = \bm{z}_k - \hat{\bm{z}}_k$ at time $k$, the uncertainty matrix $\Delta\bm{B}$ can be identified by the following optimization problem:
\begin{equation} \label{opt_delta_B}
\min_{\Delta{\bm{B}}^*}\left \Vert \Delta \bm{z}_{k} - \Delta{\bm{B}}\bm{u}_k\right \Vert_2
\end{equation}

Typically the uncertainty matrix $\Delta\bm{B}$ can be identified by $\Delta\hat{\bm{B}}_k = \Delta\bm{\mathcal{Z}}_k  \bm{\mathcal{U}}_k ^{\dagger}$, where $\Delta\bm{\mathcal{Z}}_k$ and $\bm{\mathcal{U}}_k$ denote the column-wise datasets of $\Delta\bm{z}$ and $\bm{u}$, respectively. In practice, however, such a least squares approach may suffer from issues such as overfitting and numerical instability, especially if the dataset is noisy or limited in size.

Hence, we introduce an adaptive Koopman model which employs a modified recursive algorithm to estimate $\Delta\bm{B}$ online. 
To enhance the robustness of the identification process against measurement noise and transient variations, the data were retained in memory through a sliding window over the most recent $N_w$ time steps, forming the dataset $\mathcal{D}_k =\left \lbrace  {\bm{\mathcal{U}}_k, \Delta\bm{\mathcal{Z}}_k} \right \rbrace$, where:
\[
\begin{split}
\bm{\mathcal{U}}_k = \left[ {\bm{u}}_{k-N_w} \ {\bm{u}}_{k-N_w+1} \ \cdots \ {\bm{u}}_{k-1}\right],\\
\Delta\bm{\mathcal{Z}}_k = \left[ \Delta\bm{z}_{k-N_w+1} \ \Delta\bm{z}_{k-N_w+2} \ \cdots \ \Delta\bm{z}_{k}\right].
\end{split}
\]

To promote sparsity and mitigate overfitting, particularly with noisy or limited data, an $\ell{1}$-regularized exponentially-weighted objective was adopted:
\begin{equation}
\begin{aligned}
    \mathcal{L}_k( \Delta\bm{B} ) &= \frac{1}{N_w}\sum_{i=k-N_w}^{k-1}\lambda^{k-i}\left \Vert \Delta \bm{z}_{i+1} - \Delta{\bm{B}}\bm{u}_{i} \right \Vert_2^2+\gamma\Vert \Delta\bm{B} \Vert_1 \\
    &= \frac{1}{N_w}\left \Vert  \Delta\bm{\mathcal{Z}}_k - \Delta\bm{B}\bm{\mathcal{U}}_k \bm{\Lambda}^{\frac{1}{2}}\right \Vert_2^2 + \gamma\Vert \Delta\bm{B} \Vert_1
\end{aligned}
\end{equation}
where $0<\lambda<1$ denote a forgetting factor that exponentially discounts older data, $\bm{\Lambda}=\text{diag}(\lambda^{N_w-1},\lambda^{N_w-2},\dots,\lambda)$, and $\gamma > 0$ induces sparsity. At each sampling instant $k$, the gradient of $\mathcal{L}_k( \Delta\bm{B} )$ can be calculated as:
\begin{equation}
\nabla_{\Delta\bm{B}}\mathcal{L}_k = -\frac{2}{N_w} \left (  \Delta\bm{\mathcal{Z}}_k - \Delta\bm{B}\bm{\mathcal{U}}_k  \right )\bm{\Lambda}\bm{\mathcal{U}}_k^{\top} + \gamma \operatorname{sign}(\Delta \bm{B})
\end{equation}
To improve numerical stability and convergence speed, the Adam \cite{kingma2017adammethodstochasticoptimization} algorithm is chosen as the gradient descent algorithm:
\begin{equation} \label{ADAM}
\begin{split}
\bm{m}_{k} &= \beta_1\bm{m}_{k-1} + (1 - \beta_1)\nabla_{\Delta\bm{B}}\mathcal{L}_k,\\
\bm{v}_{k} &= \beta_2\bm{v}_{k-1} + (1 - \beta_2)\nabla_{\Delta\bm{B}}\mathcal{L}_k\odot\nabla_{\Delta\bm{B}}\mathcal{L}_k,\\
\hat{\bm{m}}_{k} &= \frac{\bm{m}_k}{1 - \beta_1^k}, \quad \hat{\bm{v}}_{k}=\frac{\bm{v}_k}{1 - \beta_2^k},
\\
\Delta \bm{B}_{k}^{*}&=\Delta \bm{B}_{k-1}-\rho \frac{\hat{\bm{m}}_{k}}{\sqrt{\hat{\bm{v}}_{k}}+\hat{\varepsilon}},
\end{split}
\end{equation}
where $\rho$ denotes the learning rate, $\bm{m}_{k}$ and $\bm{v}_{k}$ denote the estimates of the first and second moment of the gradients, respectively. $\hat{\bm{m}}_{k}$ and $\hat{\bm{v}}_{k}$ are their bias-corrected counterparts. The hyperparameters $\beta_1$ and $\beta_2$ denote the exponential decay rates for estimating moments, with their default values commonly set to 0.9 and 0.999. $\hat{\varepsilon}$ is a small constant, usually taken as $1e^{-8}$.
This yields a gradient-driven, memory-efficient algorithm whose computational complexity grows linearly with the size of $\Delta\bm{B}$, making it suitable for real-time, high-dimensional Koopman operator refinement under streaming data.
\begin{remark}
The gradient-based update does not explicitly enforce \emph{persistent excitation} (PE), which is typically required for parameter convergence. Here, the goal of the online adaptation is not exact parameter convergence, but local model accuracy for closed-loop control. In practice, excitation mainly arises during transient operation from the tracking task, constraints, and disturbances, leading to time-varying inputs $\bm{u}_k$. As the system approaches steady state, the update naturally weakens toward the nominal model, while regularization and forgetting mechanisms help mitigate parameter drift in low-excitation regions.
\end{remark}

\begin{remark}
The adaptive matrix $\Delta \bm{B}$ provides a control-oriented, lumped representation of uncertainty affecting the input-dependent part of the lifted model, rather than a physical decomposition of all uncertainty sources. In the proposed framework, it is primarily intended to compensate for inertia-induced modeling errors and bounded target-generated torques that can be locally represented as an equivalent input-channel disturbance, rather than exactly reconstructing arbitrary unmatched additive disturbances. If the dominant uncertainty is not input-matched, the current $\Delta \bm{B}$-only update can be extended to a more general adaptive lifted model involving $\Delta \bm{A}$, $\Delta \bm{B}$, and an additive disturbance channel. Practical stability can be maintained provided that the residual modeling error is bounded and incorporated into the admissible prediction-error/disturbance channel, and does not violate the state/input constraints or the robustness margin of the terminal set. Additional disturbance compensation or offset-free MPC mechanisms may be needed to preserve steady-state accuracy.
\end{remark}

\section{Koopman Operator-based MPC Design} \label{Sec_Controller}
\subsection{Model Predictive Controller Design}
In this subsection, we propose an MPC formulation based on the Koopman linear model and an interpolated initial state. The MPC optimization problem is formulated at each time step $k$ as follows \cite{de2024koopman}:
\begin{subequations} \label{MPC}
\begin{align}
\min_{\bm{U}^*_k,\mu^*_k}  J &= \min_{\bm{U}^*_k,\mu^*_k}\sum_{i=0}^{N-1} l \bigl( {\bm{z}}_{i|k}, {\bm{u}}_{i|k}\bigr) + V_F \bigl( {\bm{z}}_{N|k} \bigr) + l_r( {\bm{z}}_{0|k} )\\
\text{s.t.} \
& \hat{\bm{x}}_{k} = \bm{C}\hat{\bm{z}}_{k} \\
& {\bm{z}}_{0|k} = {\bm{z}}_k = (1-\mu_k)\bm{\Phi}({\bm{x}}_{k}) + \mu_k{\bm{z}}^*_{1|k-1} \label{MPC_cons_ISS}\\
& \hat{{\bm{z}}}_{i+1|k} = \bm{A} {\bm{z}}_{i|k} + \bm{B} {\bm{u}}_{i|k}, \ \forall i \in \mathbb{N}_{\leq N-1} \label{MPC_cons_tough}\\
& \bm{C}\hat{\bm{z}}_{i|k} \in \mathbb{X}, \ \forall i \in \mathbb{N}_{\leq N} \\
& {\bm{u}}_{i|k} \in \mathbb{U}, \ \forall i \in \mathbb{N}_{\leq N-1} \\
& {\bm{z}}_{N|k} \in \mathbb{Z}_F 
\end{align}
\end{subequations}
Here $l( \cdot )$ represents the stage cost function,  $V_F( \cdot )$ represents the terminal cost function, $l_r( \cdot )$ is a regularization term penalizing the difference between the interpolated initial state and the Koopman lifted state. In this work, the following choices are adopted:
\begin{subequations}
\begin{align}
l \bigl( \bm{z}_{i|k}, {\bm{u}}_{i|k}\bigr) &= \sum_{i=0}^{N-1}\bigl\Vert \bm{C}\hat{{\bm{z}}}_{i|k}\bigr\Vert^2_{\bm{Q}} + \Vert {\bm{u}}_{i|k} \Vert^2_{\bm{R}} \label{stage_cost}\\
V_F \bigl( {\bm{z}}_{N|k}\bigr) &= \bigl\Vert \hat{{\bm{z}}}_{N|k} \bigr\Vert^2_{\bm{P}} \label{ter_cost} \\
l_r( \bm{z}_{0|k} ) &= \gamma_e\Vert \bm{\Phi}({\bm{x}}_{k}) - {\bm{z}}_{0|k}\Vert_2^2 \label{reg_cost}
\end{align}
\end{subequations}
with weighting matrices $\bm{Q}\succ0$, $\bm{R}\succ0$, $\bm{P}\succ0$ and a regularization coefficient $\gamma_e > 0$.

To solve this optimization problem efficiently, we convert it into a standard QP. First, define the optimization variables that consist of the control input sequence and the interpolation parameter:
\begin{equation}
\begin{aligned}
\bm{v}_k &=\left [ {\bm{u}}^{\top}_{0|k} \ \ {\bm{u}}^{\top}_{1|k} \ \ \cdots \ \ {\bm{u}}^{\top}_{N-1|k} \ \ \mu_k\right ]^{\top} \\&= \left[ {\bm{U}}^{\top}_{k} \ \ \mu_k\right]^{\top}\in \mathbb{R}^{3N+1}
\end{aligned}
\end{equation}

Next, by defining the error term $\bm{e}_k = \bm{\Phi}({\bm{x}}_{k}) - {\bm{z}}_{1|k-1}^{*}$, the interpolated initial state \eqref{MPC_cons_ISS} can be rewritten as:
\begin{equation} \label{z0k}
{\bm{z}}_{0|k} = \bm{\Phi}({\bm{x}}_{k}) + \mu_k\bm{e}_k
\end{equation}
The predicted state trajectory $\bm{Z}_k=\left [ {\bm{z}}^{\top}_{0|k} \ {\bm{z}}^{\top}_{1|k} \ \cdots \ {\bm{z}}^{\top}_{N|k}\right ]^{\top}$ can be expressed in a compact matrix form using the system dynamics \eqref{MPC_cons_tough}:
\begin{equation}
\bm{Z}_k=\bm{\mathcal{A}}{\bm{z}}_{0|k} + \bm{\mathcal{B}}\bm{U}_k
\end{equation}
where
\[
\bm{\mathcal{A}}:={\scriptsize \begin{bmatrix}\mathbf{I} \\ \bm{A} \\ \bm{A}^2 \\ \vdots \\ \bm{A}^N \end{bmatrix}},
\bm{\mathcal{B}}:={\scriptsize \begin{bmatrix}\mathbf{0} & \mathbf{0} & \cdots & \mathbf{0} \\ 
\bm{B} & \mathbf{0} & \cdots & \mathbf{0} \\ 
\bm{A}\bm{B}  & \bm{B} & \cdots & \mathbf{0}\\ 
\vdots  & \vdots & \ddots & \vdots\\ 
\bm{A}^{N-1}\bm{B}  & \bm{A}^{N-1}\bm{B} & \cdots & \bm{B} \end{bmatrix}}.
\]
By substituting \eqref{z0k} into the trajectory equation, the state trajectory $\bm{Z}_k$ and control sequence $\bm{U}_k$ can be rewritten as functions of the optimization variable $\bm{v}_k$:
\begin{equation}
\bm{Z}_k= \bm{S}_v\bm{v}_k + \bm{S}_c, \quad \bm{U}_k= \bm{S}_u\bm{v}_k.
\end{equation}
with $\bm{S}_v = \begin{bmatrix}\bm{\mathcal{B}} & \bm{\mathcal{A}}\bm{e}_k \end{bmatrix}$, $\bm{S}_c = \bm{\mathcal{A}}\bm{\Phi}(\bm{x}_k)$, $\bm{S}_u = \begin{bmatrix} \textbf{I}_{3N} & \bm{0} \end{bmatrix}$.
Finally, we formulate the cost function \eqref{MPC} in terms of $\bm{v}_k$. Note that the regularization term \eqref{reg_cost} can be rewritten as $l_r = \lambda\mu_k^2(\bm{e}_k^{\top}\bm{e}_k)$. This allows the cost function to be expressed in the standard QP form:
\begin{equation} \label{QP}
J = \frac{1}{2}\bm{v}_k^{\top}\bm{H}\bm{v}_k + \bm{f}^{\top}\bm{v}_k + \text{constant}
\end{equation}
where the Hessian matrix $\bm{H}$ and the linear term $\bm{f}$ are given by
\[
\bm{H}=2\left ( \bm{S}_v^{\top}\bar{\bm{Q}}\bm{S}_v + \bm{S}_u^{\top}\bar{\bm{R}}\bm{S}_u + \bm{H}_\text{{reg}}\right ),
\bm{f} =  2\bm{S}_c^{\top}\bar{\bm{Q}}\bm{S}_v.
\]
Here, $\bar{\bm{Q}} = \text{diag}( \bm{C}^{\top}\bm{Q}\bm{C},\dots,\bm{C}^{\top}\bm{Q}\bm{C},\bm{P})$, $\bar{\bm{R}} = \text{diag}(\bm{R},\dots,\bm{R} )$, $\bm{H}_{\mathrm{reg}}$ is a zero matrix except for its last diagonal element $\lambda\bm{e}^{\top}_k\bm{e}_k$. This QP problem can be efficiently solved at each sampling time using standard numerical solvers to obtain the optimal control sequence and interpolation parameter.

\subsection{Terminal ingredients design}
In order to ensure the stability and recursive feasibility of the proposed MPC, it is critical to design appropriate terminal ingredients. These ingredients consist of the terminal cost function, terminal set, and terminal control gain, which together ensure that if the state reaches the terminal region, the subsequent closed-loop trajectory is guaranteed to converge asymptotically toward the equilibrium point. 

For simplicity, the uncertain Koopman model \eqref{un_koop} is represented as
\begin{equation}
{\bm{z}}_{k+1} = \bm{h}\bigl({\bm{z}}_{k}, {\bm{u}}_{k} \bigr)
\end{equation}
The following basic assumption regarding the terminal ingredients should be satisfied to facilitate the analysis.
\begin{assumption} 
Define $V_F(\cdot)$ as the terminal cost function, $\mathbb{Z}_F$ as the terminal set, and $\bm{F}$ as the terminal control gain. The following conditions are assumed to hold:
\begin{enumerate}
    \item \textbf{Terminal Set:} 
    \begin{subequations}
        \begin{align}
        \bm{h}\bigl( \mathbb{Z}_F,\bm{F}\mathbb{Z}_F \bigr) &\subseteq \mathbb{Z}_F, \\
        \bm{C}\,\mathbb{Z}_F &\subseteq \mathbb{X}, \\
        \bm{F}\,\mathbb{X} &\subseteq \mathbb{U}.
        \end{align}
    \end{subequations}
    \item \textbf{Terminal Cost:}
    \begin{subequations}
    \begin{align}
    V_F\bigl( {\bm{z}} \bigr) \geq 0,\quad V_F\bigl( 0 \bigr) = 0\\
        \begin{split} \label{Vf2}
        V_F\bigl( \bm{h}({\bm{z}},\bm{F}{\bm{z}}) \bigr)  \leq V_F\bigl( {\bm{z}} \bigr) - &l({\bm{z}},\bm{F}{\bm{z}}), \\ &\forall {\bm{z}}\in\mathbb{Z}_{F} .
        \end{split}
    \end{align}
    \end{subequations}
    \end{enumerate}
\end{assumption}

Under the above assumption, the stage cost function and terminal cost function are define as \eqref{stage_cost} and \eqref{ter_cost}, respectively, and the terminal set is defined as $\mathbb{Z}_F:= \lbrace {\bm{z}}: V_F({\bm{z}}) = \Vert {\bm{z}} \Vert_P^2 \leq \sigma \rbrace$ with a threshold $\sigma>0$.

In most studies, to guarantee the decrease condition \eqref{Vf2} of the terminal cost, in the nominal case (ignoring the disturbance), the
terminal weighting matrix $\bm{P}$ is obtained by solving the following Lyapunov inequality:
\[
\left(\boldsymbol{A}+\boldsymbol{B} \boldsymbol{F}\right)^{\top} \boldsymbol{P}\left(\boldsymbol{A}+\boldsymbol{B} \boldsymbol{F}\right)-\boldsymbol{P}+\boldsymbol{Q}+\boldsymbol{F}^{\top} \boldsymbol{R F} \leq 0
\]
However, in our study, an adaptive Koopman framework is adopted, implying that the system matrix $\bm{B}$ is updated online, further leading to the online computation of $\bm{P}$. To reduce the burden of online computing, the uncertain model \eqref{un_koop} is again recast as:
\begin{equation}
\begin{aligned}
    {\bm{z}}_{k+1} &= \bm{A}{\bm{z}}_{k} + \bm{B}_0{\bm{u}}_{k} + \bm{B}_1{\bm{u}}_{k} +\Delta\bm{B}{\bm{u}}_{k} \\
    &= \bm{A}{\bm{z}}_{k} + \bm{B}_0{\bm{u}}_{k} + \tilde{\Delta}{\bm{z}}_{k}
\end{aligned}
\end{equation}
with a constant matrix $\bm{B}_0:=\bm{B}_0({\bm{q}}_0,{\bm{\omega}}_0)$ constructed from known initial state, a state-dependent term $\bm{B}_1:=\bm{B}_1({\bm{q}}_k,{\bm{\omega}}_k)$ satisfying a boundedness condition (i.e., $\bigl\Vert \bm{B}_1 {\bm{u}}_{k} \bigr\Vert \leq \epsilon \bigl\Vert {\bm{u}}_{k} \bigr\Vert$),  and a extended perturbation $\tilde{\Delta}{\bm{z}}_{k} = \bm{B}_1{\bm{u}}_{k} +\Delta\bm{B}{\bm{u}}_{k}$. 

Taking $\tilde{\Delta}{\bm{z}}_{k}$ into account, the overall decrease condition must be modified as follows:
\begin{equation} \label{P_ineq_all}
\begin{split} 
    (\bm{A}_{F}{\bm{z}}_k + \tilde{\Delta}{\bm{z}}_{k})^{\top}\bm{P}(\bm{A}_{F}{\bm{z}}_k \ + &\ \tilde{\Delta}{\bm{z}}_{k}) \\
    -{\bm{z}}_k^{\top}\bm{P}{\bm{z}}_k + {\bm{z}}_k^{\top}\bm{Q}{\bm{z}}_k + &{\bm{z}}_k^{\top}\bm{F}^{\top}\bm{R}\bm{F}{\bm{z}}_k \leq 0 
\end{split}
\end{equation}
where $\bm{A}_{F}:=\bm{A}+\bm{B}_0\bm{F}$. 
By introducing a to-be-designed scalar $\kappa \in (0,1)$, \eqref{P_ineq_all} is equivalent to:
\begin{equation} \label{P_ineq}
\bm{A}_F^{\top} \bm{P} \bm{A}_F-\left(1-\kappa\right) \bm{P}+ \bm{Q} +\bm{F}^{\top} \bm{R} \bm{F} \leq 0 
\end{equation}
with the extended perturbation satisfying
\begin{equation}
    \tilde{\Delta}\bm{z}_{k}^{\top}\bm{P}\tilde{\Delta}\bm{z}_{k} + 2 {\bm{z}}_k^{\top}\bm{A}^{\top}_{F}\bm{P}\tilde{\Delta}\bm{z}_{k} \leq \kappa {\bm{z}}_k^{\top}\bm{P}{\bm{z}}_k
\end{equation}
for all ${\bm{z}}_k \in \mathbb{Z}_F$.

By focusing on \eqref{P_ineq} and applying the Schur complement, a linear matrix inequality (LMI) can be derived as
\begin{equation} \label{LMI}
\left[\begin{array}{cccc}\left(1-\kappa\right) \bm{O} & \left(\bm{A}\bm{O} + \bm{B}_0\bm{Y}\right)^{\top} & \bm{O} & \bm{Y}^{\top} \\ \bm{A}\bm{O} + \bm{B}_0\bm{Y} & \bm{O} & \bm{0} & \bm{0} \\ \bm{O} & \bm{0} & \frac{1}{\sigma}\bm{Q}^{-1} & \bm{0} \\ \bm{Y} & \bm{0} & \bm{0} & \frac{1}{\sigma}\bm{R}^{-1}\end{array}\right] \succeq 0
\end{equation}
with the substitutions $\bm{O}=\sigma\bm{P}^{-1}$ and $\bm{Y}=\bm{F}\bm{O}$.
The design of $\bm{P}$ and $\bm{K}$ is then formulated as the following optimization problem \cite{lazar2018computation}:
\begin{equation} \label{opt_P}
\begin{aligned}
\min_{\sigma,\bm{O},\bm{Y}} \quad & \sigma \\
\text{s.t.} \quad 
& \eqref{LMI},\\
& \bm{H}_x\bm{O}\bm{H}_x^{\top} \leq\bm{h}_x, \\
& \begin{bmatrix}
    1 & \bm{H}_u\bm{Y} \\
    \bm{Y}^{\top}\bm{H}_u^{\top} & \bm{O}
\end{bmatrix} \geq\bm{0}.
\end{aligned}
\end{equation}
Once the feasible solution is obtained, $\bm{P}$ and $\bm{K}$ can be obtained by $\bm{P} = \sigma\bm{O}^{-1}$ and $\bm{F} =\bm{Y}\bm{O}^{-1}$.

It is worth noting that the proposed terminal-set construction is designed with a robustness margin, since fixed terminal ingredients are employed to retain tractable recursive-feasibility and stability guarantees under online model adaptation. As a result, a certain degree of conservatism may be introduced.  Alternative terminal constructions with reduced conservatism have been studied in the literature, e.g., \cite{chai2021dual,zhang2022robust}.

The overall adaptive Koopman-based MPC algorithm is summarized in Algorithm \ref{alg_1}.
\begin{algorithm}[H]
\renewcommand{\algorithmicrequire}{\textbf{Initialization:}}
\renewcommand{\algorithmicensure}{\textbf{Offline Computation:}}
\caption{Adaptive Koopman-based MPC for Uncertain Attitude Dynamic}\label{alg:alg1}
\label{alg_1}
\begin{algorithmic}[1]
\REQUIRE Choose $k=0$, $\bm{J}_{c0}$, $\bm{Q}$, $\bm{R}$, $T_{\text{s}}$, $N$, $N_{\phi}$, $N_w$, \\
    $\kappa$, $\gamma$, $\gamma_e$, $\rho$, $\lambda$.
\ENSURE 
\STATE Calculate nominal Koopman model matrix $\bm{A}$ and $\bm{B}$;
\STATE Set $\bm{B}[0] = \bm{B}(\bm{q}_0,\bm{\omega}_0)$ when $k=0$;
\STATE Compute $\bm{P}$ by solving \eqref{opt_P}.
\renewcommand{\algorithmicensure}{\textbf{Online Learning Phase:}}
\ENSURE 
\WHILE{$k \geq 0$}
\IF {$k < N_w$} 
\STATE $\Delta\bm{B} = \bm{0}$
\ELSE
\STATE Generate online learning dataset $\bm{U}^{\text{data}}_k$ and $\Delta\bm{Z}^{\text{data}}_k$
\STATE Compute $\Delta{\bm{B}}$ using \eqref{ADAM}
\ENDIF
\STATE $\bm{B} \leftarrow \bm{B} + \Delta\bm{B}$;
\STATE Compute $\bm{U}^*_k$ by solving \eqref{MPC};
\STATE Apply the first control input $\bm{u}_k = \bm{u}^*_{0|k}$;
\STATE $k \leftarrow k+1$
\ENDWHILE
\end{algorithmic}
\label{alg1}
\end{algorithm}

\subsection{Recursive Feasibility and Stability Analysis}
This section provides a formal analysis of the controller's essential theoretical properties. We will first prove that the MPC optimization problem is recursively feasible, ensuring persistent operation. Subsequently, we will demonstrate that the closed-loop system is input-to-state stable (ISS) with respect to the Koopman model prediction errors, guaranteeing robust performance in the presence of uncertainty.
\begin{theorem} \label{theorem_Rec} 
The Koopman-based MPC problem \eqref{MPC} is recursively feasible.
\end{theorem}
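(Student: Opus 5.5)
The plan is the standard shifted-sequence (candidate solution) argument, adapted to the interpolated initial state in \eqref{MPC_cons_ISS}. Suppose \eqref{MPC} is feasible at time $k$, with optimal input sequence $\bm{U}^*_k=\{\bm{u}^*_{0|k},\dots,\bm{u}^*_{N-1|k}\}$, interpolation parameter $\mu^*_k$, and predicted lifted trajectory $\{\bm{z}^*_{0|k},\dots,\bm{z}^*_{N|k}\}$, where $\bm{z}^*_{N|k}\in\mathbb{Z}_F$. At time $k+1$ we construct an explicit feasible pair $(\bar{\bm{U}}_{k+1},\bar{\mu}_{k+1})$.

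\textbf{Candidate.} Choose $\bar{\mu}_{k+1}=1$. Then, by \eqref{MPC_cons_ISS}, the initial lifted state is $\bm{z}_{0|k+1}=\bm{z}^*_{1|k}$, regardless of the measured $\bm{\Phi}(\bm{x}_{k+1})$. This choice is the key reason the interpolation was introduced: it decouples the candidate from the model–plant mismatch $\bm{e}_{k+1}$, and only changes the regularization cost $l_r$, not the constraints. Take the shifted and appended input sequence
\[
\bar{\bm{U}}_{k+1}=\{\bm{u}^*_{1|k},\dots,\bm{u}^*_{N-1|k},\ \bm{F}\bm{z}^*_{N|k}\}.
\]
Since the prediction model \eqref{MPC_cons_tough} is deterministic, the first $N-1$ predicted states coincide with $\bm{z}^*_{2|k},\dots,\bm{z}^*_{N|k}$. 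Hence the state constraints $\bm{C}\hat{\bm{z}}\in\mathbb{X}$ and the input constraints $\bm{u}\in\mathbb{U}$ are inherited directly from feasibility at time $k$.

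\textbf{Tail step.} It remains to show three things for the last step: (i) $\bm{F}\bm{z}^*_{N|k}\in\mathbb{U}$; (ii) $\bm{C}\bm{z}^*_{N|k}\in\mathbb{X}$; (iii) the new terminal state $\bm{h}(\bm{z}^*_{N|k},\bm{F}\bm{z}^*_{N|k})\in\mathbb{Z}_F$. Items (i) and (ii) follow from the terminal-set conditions of the assumption on terminal ingredients, namely $\bm{C}\mathbb{Z}_F\subseteq\mathbb{X}$ and $\bm{F}\mathbb{X}\subseteq\mathbb{U}$. They are also enforced constructively by the LMI constraints in \eqref{opt_P}: $\bm{H}_x\bm{O}\bm{H}_x^\top\le\bm{h}_x$ and the Schur-complement input constraint. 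Item (iii) is the positive invariance condition $\bm{h}(\mathbb{Z}_F,\bm{F}\mathbb{Z}_F)\subseteq\mathbb{Z}_F$.

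\textbf{Main obstacle.} The hard part is justifying (iii) under online adaptation. The matrix $\bm{B}$ used in prediction changes between $k$ and $k+1$, because $\Delta\bm{B}$ is updated by \eqref{ADAM}. So the shifted trajectory is, strictly speaking, generated by a different model than the one at $k+1$. I would handle this through the decomposition $\bm{z}_{k+1}=\bm{A}\bm{z}_k+\bm{B}_0\bm{u}_k+\tilde{\Delta}\bm{z}_k$ and argue in two steps.

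First, $\bm{P}$ satisfies \eqref{P_ineq} via \eqref{LMI}, and the perturbation bound
\[
\tilde{\Delta}\bm{z}^\top\bm{P}\tilde{\Delta}\bm{z}+2\bm{z}^\top\bm{A}_F^\top\bm{P}\tilde{\Delta}\bm{z}\le\kappa\,\bm{z}^\top\bm{P}\bm{z}
\]
holds on $\mathbb{Z}_F$. Together these give
\[
V_F(\bm{h}(\bm{z},\bm{F}\bm{z}))\le V_F(\bm{z})-l(\bm{z},\bm{F}\bm{z})\le V_F(\bm{z})\le\sigma,
\]
so $\mathbb{Z}_F$ is invariant for every admissible model in the family, i.e.\ for any $\bm{B}_1$ with $\Vert\bm{B}_1\bm{u}\Vert\le\epsilon\Vert\bm{u}\Vert$ and any $\Delta\bm{B}$ consistent with the perturbation bound.

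Second, the constraint and feasibility conditions at time $k+1$ are stated for the prediction model at $k+1$. Its first $N-1$ steps must therefore be shown to remain in the constraint sets. I would try to enforce this by treating the model change as part of $\tilde{\Delta}$, and hence covered by the robustness margin $\kappa$. If that fails, I would fall back on evaluating the constraints along the fixed nominal pair $(\bm{A},\bm{B}_0)$, so that the shifted candidate is exactly reproduced.

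Finally, induction on $k$ from initial feasibility at $k=0$ yields recursive feasibility.
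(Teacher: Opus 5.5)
This is the paper's argument. You shift $\bm{U}^*_k$, append $\bm{F}\bm{z}^*_{N|k}$, and choose the interpolation parameter so that $\bm{z}_{0|k+1}=\bm{z}^*_{1|k}$. Then \eqref{P_ineq}, which implies $\bm{A}_F^{\top}\bm{P}\bm{A}_F\preceq\bm{P}$, gives $V_F(\bm{z}_{N|k+1})\le V_F(\bm{z}^*_{N|k})\le\sigma$.

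Your $\bar{\mu}_{k+1}=1$ is in fact the choice consistent with \eqref{MPC_cons_ISS}. The paper writes ``let $\mu_k=0$'', which would give $\bm{z}_{0|k+1}=\bm{\Phi}(\bm{x}_{k+1})$. Its own proof of Theorem~\ref{theorem_ISS_z} uses $\bm{z}_{0|k+1}=\bm{z}^*_{1|k}$, so $\mu=1$ is what is meant.

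You also check the constraints of the appended step, which the paper omits. Two small corrections there:
\begin{itemize}
\item The new state constraint is on $\bm{C}\bm{z}_{N|k+1}$, which follows from (iii) and $\bm{C}\mathbb{Z}_F\subseteq\mathbb{X}$. The condition $\bm{C}\bm{z}^*_{N|k}\in\mathbb{X}$ is already inherited from time $k$.
\item The input condition actually used is $\bm{F}\mathbb{Z}_F\subseteq\mathbb{U}$, which is what the $\bm{H}_u\bm{Y}$ constraint in \eqref{opt_P} encodes.
\end{itemize}

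The paper does not resolve the adaptation issue you raise. It asserts $\bm{z}_{i|k+1}=\bm{z}^*_{i+1|k}$ and writes $\bm{A}\bm{z}+\bm{B}\bm{F}\bm{z}=\bm{A}_F\bm{z}$ with $\bm{A}_F=\bm{A}+\bm{B}_0\bm{F}$. In other words, it tacitly uses the same fixed prediction model at $k$ and $k+1$. Your fallback is exactly that hypothesis, and it is the branch to commit to. Under it, \eqref{P_ineq} alone gives the terminal decrease, without needing the bound on $\tilde{\Delta}\bm{z}$.

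Your first branch would not close the argument. The margin $\kappa$ and the bound on $\tilde{\Delta}\bm{z}$ enter only the one-step inequality on $\mathbb{Z}_F$. By contrast, $\bm{B}_{k+1}\neq\bm{B}_k$ displaces every shifted state $\bm{z}_{1|k+1},\dots,\bm{z}_{N-1|k+1}$ away from the feasible $\bm{z}^*_{2|k},\dots,\bm{z}^*_{N|k}$. Neither $\bm{C}\hat{\bm{z}}_{i|k+1}\in\mathbb{X}$ nor $\bm{z}_{N-1|k+1}\in\mathbb{Z}_F$ (required before invariance can be used) has any tightening margin. A genuinely time-varying $\bm{B}$ would need constraint tightening driven by a bound on $\Vert\bm{B}_{k+1}-\bm{B}_k\Vert$, or a tube construction.
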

\begin{IEEEproof} 
Suppose that at time $k$, there exists an optimal control input sequence $\bm{U}^{*}_k = \bigl[ \bm{u}^{*}_{0|k},\bm{u}^{*}_{1|k},\cdots, \bm{u}^{*}_{N-1|k}\bigr]$, and optimal terminal state satisfies
$V_F\bigl(\bm{z}^*_{N|k}\bigr) = \bigl\Vert\bm{z}^*_{N|k} \bigr\Vert_P^2\leq\sigma$.
At time $k+1$, a candidate sub-optimal solution can be given by:
\begin{equation}
\begin{aligned}
\bm{U}_{k+1} &= \bigl[ \bm{u}_{0|k+1},\bm{u}_{1|k+1},\cdots, \bm{u}_{N-2|k+1},\bm{u}_{N-1|k+1}\bigr] \\
&= \bigl[ \bm{u}^{*}_{1|k},\bm{u}^{*}_{2|k},\cdots, \bm{u}^{*}_{N-1|k},\bm{F}{\bm{z}}^{*}_{N|k}\bigr] 
\end{aligned}
\end{equation}
where $\bm{u}_{N-1|k+1} = \bm{F}{\bm{z}}^{*}_{N|k}$ is the terminal control law. Furthermore, let $\mu_k=0$, which yields ${\bm{z}}_{i|k+1}={\bm{z}}^*_{i+1|k},i=0,1,\dots,N-1$. It holds on to that:
\begin{equation}
\begin{aligned}
V_F({\bm{z}}_{N|k+1} ) &=  V_F\bigl(\bm{A}{\bm{z}}_{N-1|k+1} + \bm{B}{\bm{u}}_{N-1|k+1}\bigr) \\
&= V_F\bigl(\bm{A}_F{\bm{z}}^*_{N|k} \bigr) \\
&\leq  V_F\bigl({\bm{z}}^*_{N|k}\bigr) \\
& \leq \sigma
\end{aligned}
\end{equation}
Thus ${\bm{z}}_{N|k+1} \in \mathbb{Z}_F$, which completes the proof.
\end{IEEEproof}

  \begin{theorem} \label{theorem_ISS_z} 
    Under Assumption 1 and for trajectories evolving in the compact set $\mathbb{X}$, the optimal predicted initial state ${\bm{z}}^{*}_{0|k}$ is ISS, i.e., there exist a $\mathcal{KL}$ function  $\alpha_z(\cdot)$ and a $\mathcal{K}$ function $\beta_z(\cdot)$, such that:
    \begin{equation} \label{ISS_z}
    \Vert {\bm{z}}^{*}_{0|k} \Vert \leq \alpha_z(\Vert {\bm{z}}_{0|0}\Vert,k) + \beta_z(\Vert \bm{e}_{[0,k]} \Vert), k\in\mathbb{N}_{\geq1}
    \end{equation}
    where $\Vert \bm{e}_{[0,k]} \Vert = \sup_{0\le i\le k}\Vert \bm{e}_i \Vert$.
  \end{theorem}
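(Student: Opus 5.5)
The plan is to use the optimal value function $V_k^{*}:=J^{*}(\bm{z}_k)$ of problem \eqref{MPC} as an ISS-Lyapunov function for the optimal initial state $\bm{z}^{*}_{0|k}$. The disturbance input is the lifting mismatch $\bm{e}_k=\bm{\Phi}(\bm{x}_k)-\bm{z}^{*}_{1|k-1}$. The argument has three standard ingredients: quadratic lower and upper bounds on $V^{*}_k$ in terms of $\Vert\bm{z}^{*}_{0|k}\Vert$, a perturbed decrease inequality of the form $V^{*}_{k+1}-V^{*}_k\le -c_1\Vert\bm{z}^{*}_{0|k}\Vert^2+\sigma_e(\Vert\bm{e}_{k+1}\Vert)$, and a standard comparison argument yielding \eqref{ISS_z}.

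\textbf{Bounds on $V^{*}_k$.} For the lower bound, the stage cost contains $\Vert\bm{C}\bm{z}^{*}_{0|k}\Vert_{\bm{Q}}^2$, but $\bm{C}$ only extracts $(\bm{\omega},\bm{q})$. I would therefore use the fact that on the compact set $\mathbb{X}$ (Assumption~\ref{ASS1}) the lifted coordinates are polynomial, Lipschitz functions of $\bm{x}$ (Assumption~\ref{PROP2}). Combining this with the regularisation term $l_r$, which ties $\bm{z}_{0|k}$ to $\bm{\Phi}(\bm{x}_k)$, should give $V^{*}_k\ge a_1\Vert\bm{z}^{*}_{0|k}\Vert^2$. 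A cruder alternative is to work with the $\bm{P}$-weighted full-state cost along the horizon. For the upper bound, inside $\mathbb{Z}_F$ the terminal controller $\bm{F}$ is feasible and gives $V^{*}_k\le V_F(\bm{z})=\Vert\bm{z}\Vert_{\bm{P}}^2$. Outside $\mathbb{Z}_F$, boundedness of $\mathbb{X}$ and of the cost yields $V^{*}_k\le a_2\Vert\bm{z}^{*}_{0|k}\Vert^2$ on the compact feasible region, by the usual scaling argument. This is why the result is only regional.

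\textbf{Decrease condition.} At time $k+1$ I would take the shifted candidate from the proof of Theorem~\ref{theorem_Rec}, which is feasible by that theorem, with $\mu_{k+1}=1$, so that $\bm{z}_{0|k+1}=\bm{z}^{*}_{1|k}$ exactly. Standard telescoping together with the terminal decrease \eqref{Vf2} then gives a candidate cost of at most $V^{*}_k-l(\bm{z}^{*}_{0|k},\bm{u}^{*}_{0|k})+l_r$, where $l_r=\gamma_e\Vert\bm{e}_{k+1}\Vert^2$. Since $V^{*}_{k+1}$ is at most the candidate cost, and $l\ge c\Vert\bm{z}^{*}_{0|k}\Vert^2$ by the same lower-bound argument as above, this yields the decrease inequality with $\sigma_e(s)=\gamma_e s^2$.

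\textbf{Main obstacle.} The delicate part is that the terminal decrease \eqref{Vf2} was certified for the perturbed model through \eqref{P_ineq}, with $\tilde{\Delta}\bm{z}_k$ absorbed by the $\kappa$ margin, whereas the predictions use $\bm{A},\bm{B}$ and the adaptive $\bm{B}$ changes between time steps. I would handle this by treating any change in the prediction model between $k$ and $k+1$ as part of the mismatch absorbed into $\bm{e}_{k+1}$, invoking the robustness margin $\kappa$, and using Lipschitz continuity of the quadratic costs on the compact set. This produces an extra term $L\Vert\bm{e}_{k+1}\Vert$ that is absorbed into $\sigma_e$. With the three ingredients in place, $V^{*}_{k+1}\le(1-c_1/a_2)V^{*}_k+\sigma_e(\Vert\bm{e}_{k+1}\Vert)$. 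Iterating this bound and converting back through $a_1,a_2$ gives \eqref{ISS_z}, with $\alpha_z(s,k)=\sqrt{a_2/a_1}\,(1-c_1/a_2)^{k/2}s$ and $\beta_z$ built from $\sigma_e/(c_1/a_2)$.
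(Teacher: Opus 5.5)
Your skeleton is the paper's proof. The optimal cost serves as ISS-Lyapunov function, and the comparison candidate is the shifted sequence closed by $\bm{F}\bm{z}^*_{N|k}$ with $\bm{z}_{0|k+1}=\bm{z}^*_{1|k}$ (your $\mu_{k+1}=1$ is the choice consistent with \eqref{MPC_cons_ISS}). The regulariser contributes exactly $\gamma_e\Vert\bm{e}_{k+1}\Vert_2^2$, the terminal inequality disposes of the tail, and ISS follows from the dissipation inequality; the paper cites the Jiang--Wang theorem where you iterate the geometric bound by hand. Your construction of $a_1$ from the structure of $\bm{\Phi}$ together with $l_r$ is more careful than the paper. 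The paper simply posits $\alpha_1,\alpha_2$ and writes $\Vert\bm{z}^*_{0|k}\Vert_{\bm{Q}}^2$ as if $\bm{Q}$ weighted the whole lifted state.

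The step that fails is your resolution of the ``main obstacle''. Suppose the prediction matrix changes from $\bm{B}_k$ (used at time $k$) to $\bm{B}_{k+1}$. Then the shifted candidate started at $\bm{z}^*_{1|k}$ no longer reproduces $\bm{z}^*_{i+1|k}$. The deviation is $\sum_{j=0}^{i-1}\bm{A}^{i-1-j}(\bm{B}_{k+1}-\bm{B}_k)\bm{u}^*_{j+1|k}$, driven by the parameter increment, not by $\bm{e}_{k+1}=\bm{\Phi}(\bm{x}_{k+1})-\bm{z}^*_{1|k}$. Since $\bm{e}_{k+1}$ is a fixed signal, you cannot enlarge it to contain this deviation.

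Nor is the increment dominated by any class-$\mathcal{K}$ function of $\Vert\bm{e}\Vert$. It is computed from the nominal-model errors $\Delta\bm{z}$, and the Adam step has size of order $\rho$ largely independent of the gradient magnitude. The term $\gamma\operatorname{sign}(\Delta\bm{B})$ keeps it nonzero even if $\bm{e}\equiv\bm{0}$. Feasibility of the candidate is then also not inherited from Theorem~\ref{theorem_Rec} without a bound on the increment or a constraint tightening.

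The paper's proof avoids all this only implicitly, by treating $\bm{A},\bm{B}$ as frozen across the step; its shifted tail is written as $\bm{A}_F\bm{z}^*_{N|k}$ with $\bm{A}_F=\bm{A}+\bm{B}_0\bm{F}$. To obtain the statement as written, adopt that freezing assumption explicitly. Otherwise $\Vert\bm{B}_{k+1}-\bm{B}_k\Vert$ must enter as a second ISS input, which changes the theorem.

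Two smaller repairs are needed even with the model frozen.
\begin{itemize}
\item In the decrease step you claim $l(\bm{z}^*_{0|k},\bm{u}^*_{0|k})\ge c\Vert\bm{z}^*_{0|k}\Vert^2$ ``by the same argument''. That argument needed $l_r$, because $l$ only sees $\bm{C}\bm{z}$. Keep the term $-l_r(\bm{z}^*_{0|k})$ that the telescoping actually produces (you dropped it when writing $V^*_k-l+l_r$); then $l+l_r\ge c\Vert\bm{z}^*_{0|k}\Vert^2$ does hold.
\item $V^*_k\ge\gamma_e\Vert\bm{\Phi}(\bm{x}_k)-\bm{z}^*_{0|k}\Vert^2$, which $\Vert\bm{z}^*_{0|k}\Vert$ does not control. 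The upper bound must therefore read $V^*_k\le a_2\Vert\bm{z}^*_{0|k}\Vert^2+\gamma_e\Vert\bm{e}_k\Vert^2$, using $\mu_k^*\le 1$. Your comparison argument absorbs the extra term into $\beta_z$ without difficulty.
\end{itemize}
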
 

\begin{IEEEproof} 
Consider the sub-optimal control input sequence $\bm{U}_{k+1} = \bigl[ \bm{u}^{*}_{1|k},\bm{u}^{*}_{2|k},\cdots, \bm{u}^{*}_{N-1|k},\bm{F}\bm{x}^{*}_{N|k}\bigr]$, and sub-optimal predicted initial state ${\bm{z}}_{0|k+1} = {\bm{z}}^*_{1|k}$ at time $k+1$. 

Define the candidate Lyapunov function at time $k$:

\begin{equation}
V\bigl(  {\bm{z}}^*_{0|k}\bigr) = J\bigl (  {\bm{z}}^*_{0|k}, \bm{U}^*_{k}\bigr)
\end{equation}
with
\begin{equation}
\alpha_1( \Vert {\bm{z}}^*_{0|k} \Vert ) \leq V\bigl(  {\bm{z}}^*_{0|k}\bigr) \leq \alpha_2( \Vert {\bm{z}}^*_{0|k}\Vert  )
\end{equation}
where  $\alpha_1(\cdot),\alpha_2(\cdot)$ represent $\mathcal{K}_{\infty}$-functions, respectively. 
It holds that:
\begin{equation}
\begin{aligned}
V\bigl(  {\bm{z}}&^*_{0|k+1}\bigr) - V\bigl( {\bm{z}}^*_{0|k}\bigr) \\
&= J\bigl (  {\bm{z}}^*_{0|k+1}, \bm{U}^*_{k+1}\bigr) - J\bigl (  {\bm{z}}^*_{0|k}, \bm{U}^*_{k}\bigr) \\
&\leq J\bigl (  {\bm{z}}_{0|k+1}, \bm{U}_{k+1}\bigr) - J\bigl (  {\bm{z}}^*_{0|k}, \bm{U}^*_{k}\bigr) \\
&= l \bigl({\bm{z}}_{N|k+1}, {\bm{u}}_{N|k+1}\bigr)- l \bigl({\bm{z}}^*_{0|k}, {\bm{u}}^*_{0|k}\bigr) \\
&\quad + V_F({\bm{z}}_{0|k+1}) - V_F({\bm{z}}^*_{0|k}) + l_r \bigl({\bm{z}}_{0|k+1}\bigr)\\
&\quad  -l_r( {\bm{z}}^*_{0|k} ) \\
&\leq {\bm{z}}^{*\top}_{N|k} \bigl(  \bm{A}_F^{\top} \bm{P} \bm{A}_F - \bm{P} + \bm{Q} +\bm{F}^{\top} \bm{R} \bm{F} \bigr){\bm{z}}^*_{N|k} \\
&\quad- \Vert {\bm{z}}^*_{0|k} \Vert_Q^2 + \gamma_e\Vert \bm{e}_{k+1} \Vert_2^2\\
&\leq - \Vert {\bm{z}}^*_{0|k} \Vert_Q^2 - \kappa\Vert {\bm{z}}^*_{N|k} \Vert_P^2 + \gamma_e\Vert \bm{e}_{k+1} \Vert_2^2
\end{aligned}
\end{equation}
Hence,
\begin{equation} \label{V1}
    V\bigl(  {\bm{z}}^*_{0|k+1}\bigr) - V\bigl(  {\bm{z}}^*_{0|k}\bigr) \leq -\alpha_3\bigl(\Vert {\bm{z}}^*_{0|k}\Vert\bigr) + \beta\bigl( \Vert\bm{e}_{k+1}\Vert\bigr)
\end{equation}
where $\alpha_3(\cdot)$ represent $\mathcal{K}_{\infty}$-functions respectively, $\beta(\cdot)$ represents $\mathcal{K}$-function. Thus $V\bigl(  {\bm{z}}^*_{0|k}\bigr)$ is an ISS-Lyapunov function with respect to ${\bm{z}}^*_{0|k}$. According to \cite{jiang2001input}, the optimal predicted initial satisfies \eqref{ISS_z}, which completes the proof.

\end{IEEEproof}

While Theorem \ref{theorem_ISS_z} shows that the predicted state ${\bm{z}}^*_{0|k}$ satisfies an ISS property under bounded uncertainties, this result alone does not fully capture the behavior of the actual lifted state, i.e. ${\bm{\Phi}}({\bm{x}}_k)$. In practice, the actual lifted state is influenced by additional uncertainties that are absent from the prediction. In other words, although the ISS property of the predicted state provides an important baseline, it is necessary to extend the analysis to the actual lifted state and verify that it also remains stable under the same bounds. This ensures that all sources of error are effectively managed throughout the system dynamics.

  \begin{theorem} \label{theorem_ISS_phi}

    Under the same compact-set assumptions, the lifted Koopman state is ISS, i.e. 
    \begin{equation} \label{ISS_phi}
    \Vert \bm{\Phi}({\bm{x}}_k) \Vert \leq \alpha_\phi(\Vert \bm{\Phi}({\bm{x}}_0)\Vert,k) + \beta_\phi(\Vert \bm{e}_{[0,k]} \Vert), k\in\mathbb{N}_{\geq1}
    \end{equation}
  \end{theorem}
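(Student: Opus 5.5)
The plan is to transfer the ISS bound of Theorem~\ref{theorem_ISS_z} from the optimal predicted initial state ${\bm{z}}^{*}_{0|k}$ to the actual lifted state $\bm{\Phi}({\bm{x}}_k)$. The link between them is the interpolation constraint \eqref{MPC_cons_ISS}, in its equivalent form \eqref{z0k}. At the optimum,
\begin{equation*}
{\bm{z}}^{*}_{0|k} = \bm{\Phi}({\bm{x}}_k) + \mu^*_k \bm{e}_k ,
\end{equation*}
so that $\bm{\Phi}({\bm{x}}_k) = {\bm{z}}^{*}_{0|k} - \mu^*_k \bm{e}_k$.

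Step one is to bound the interpolation parameter. The regularization term \eqref{reg_cost} makes $\mu_k$ bounded at the optimum. I would restrict $\mu_k \in [0,1]$ as an implicit box constraint, which is consistent with the convex-combination form of \eqref{MPC_cons_ISS}; then $|\mu^*_k|\le 1$. The triangle inequality gives
\begin{equation*}
\Vert \bm{\Phi}({\bm{x}}_k)\Vert \le \Vert {\bm{z}}^{*}_{0|k}\Vert + \Vert \bm{e}_k\Vert .
\end{equation*}
If the box constraint is not assumed, I would instead use Assumption~\ref{ASS1} and the Lipschitz property of $\bm{\Phi}$ (Assumption~\ref{PROP2}). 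These make $\bm{\Phi}(\mathbb{X})$ and the predicted states compact, so that $\Vert \mu^*_k \bm{e}_k\Vert \le c\Vert\bm{e}_k\Vert$ for some constant $c$.

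Step two is to substitute \eqref{ISS_z}, which gives
\begin{equation*}
\Vert \bm{\Phi}({\bm{x}}_k)\Vert \le \alpha_z(\Vert{\bm{z}}_{0|0}\Vert,k) + \beta_z(\Vert\bm{e}_{[0,k]}\Vert) + c\Vert \bm{e}_{[0,k]}\Vert .
\end{equation*}
What remains is to express the initial condition in terms of $\bm{\Phi}({\bm{x}}_0)$. At $k=0$ there is no previous prediction, so I take ${\bm{z}}_{0|0}=\bm{\Phi}({\bm{x}}_0)$ (or equivalently $\mu_0=0$). If instead ${\bm{z}}_{0|0}=\bm{\Phi}({\bm{x}}_0)+\mu_0\bm{e}_0$, I would use $\alpha_z(a+b,k)\le \alpha_z(2a,k)+\alpha_z(2b,0)$ and absorb the second piece into the gain term. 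Setting $\alpha_\phi(s,k):=\alpha_z(s,k)$ (or $\alpha_z(2s,k)$ in the second case) and $\beta_\phi(s):=\beta_z(s)+cs$ (plus $\alpha_z(2cs,0)$ if needed), we have $\alpha_\phi\in\mathcal{KL}$ and $\beta_\phi\in\mathcal{K}$, which yields \eqref{ISS_phi}.

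The main obstacle I expect is justifying a uniform bound on $\mu^*_k\bm{e}_k$ that is linear, or at least class-$\mathcal{K}$, in $\Vert\bm{e}_k\Vert$. This is delicate because \eqref{MPC} does not explicitly constrain $\mu_k$, and because $\bm{e}_k$ itself depends on the past optimal prediction. I would first try the convex-combination reading $\mu_k\in[0,1]$. Failing that, I would argue from optimality: comparing against the feasible choice $\mu_k=0$ gives $\gamma_e\mu_k^{*2}\Vert\bm{e}_k\Vert^2 \le J(\mu_k=0)$. Since $J$ is bounded on the compact set by Assumption~\ref{ASS1}, this bounds $\Vert\mu^*_k\bm{e}_k\Vert$ by a constant. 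To obtain a $\mathcal{K}$-type dependence rather than a mere constant, I would combine this with the compactness-based Lipschitz argument above.
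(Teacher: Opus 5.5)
Your argument is correct once $\mu_k\in[0,1]$ is imposed, but it follows a different route from the paper. The paper tacitly needs the same restriction when it asserts $\Vert\mu_k^*\bm{e}_k\Vert\le\eta(\Vert\bm{e}_{[0,k]}\Vert)$ with $\eta\in\mathcal{K}$, and the discussion of Fig.~\ref{fig:mu} refers to $\mu_k$ reaching ``its limit of 1''.

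The paper does not transfer the trajectory estimate \eqref{ISS_z}. It reuses the decrease inequality \eqref{V1} to build an ISS-Lyapunov function for the true lifted state, $V(\bm{\Phi}(\bm{x}_k))$, in three steps. First, it passes between $V(\bm{\Phi}(\bm{x}_k))$ and $V(\bm{z}^*_{0|k})$ using local Lipschitz continuity of the piecewise-quadratic value function (constant $L_1$). Second, it replaces $\alpha_3(\Vert\bm{z}^*_{0|k}\Vert)$ by $\alpha_3(\Vert\bm{\Phi}(\bm{x}_k)\Vert)-L_2\eta(\cdot)$ through a first-order approximation. Third, it concludes with standard ISS-Lyapunov arguments.

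Your direct triangle-inequality transfer is shorter and cleaner, for four reasons:
\begin{itemize}
\item It does not need Lipschitz continuity of $V$. That is delicate here, because $\bm{S}_v$ depends on $\bm{e}_k$, so $V$ is not a function of the initial lifted state alone.
\item It avoids the ``$\approx$'' step, which is valid only for small errors and Lipschitz $\alpha_3$.
\item It gives explicit gains, $\alpha_\phi=\alpha_z$ and $\beta_\phi=\beta_z+\mathrm{id}$.
\item It handles the mismatch between $\bm{z}_{0|0}$ and $\bm{\Phi}(\bm{x}_0)$ explicitly, which the paper leaves implicit.
\end{itemize}
What the paper's route buys, if made rigorous, is a dissipation inequality for $\bm{\Phi}(\bm{x}_k)$ itself. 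That is structurally more informative than a bare trajectory bound.

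\textbf{Correction: discard your fallback for unconstrained $\mu_k$.} Write $\mu_k\bm{e}_k=t\,\bm{e}_k/\Vert\bm{e}_k\Vert$. Then the QP minimizes, over $t$, the horizon cost from $\bm{\Phi}(\bm{x}_k)+t\,\bm{e}_k/\Vert\bm{e}_k\Vert$ plus $\gamma_e t^2$. The minimizer depends only on the direction of $\bm{e}_k$, not its length. Hence $\Vert\mu_k^*\bm{e}_k\Vert$ need not vanish as $\Vert\bm{e}_k\Vert\to0$. Compactness, or the comparison with $\mu_k=0$, then yields only a constant bound, which gives practical ISS rather than ISS. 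The box constraint $\mu_k\in[0,1]$ is therefore essential and should be stated as part of problem \eqref{MPC}, not treated as an optional reading.
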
 

\begin{IEEEproof}
According to \eqref{MPC_cons_ISS}, the true lifted state $\bm{\Phi}(\bm{x}_k)$ and the optimal predicted initial state ${\bm{z}}^*_{0|k}$ satisfy
\begin{equation} \label{eta}
\bigl\Vert \bm{\Phi}( \bm{x}_k )-{\bm{z}}^*_{0|k} \bigr\Vert = \bigl\Vert \mu^*_k\bm{e}_k \bigr\Vert \leq \eta\left( \Vert \bm{e}_{[0,k]} \Vert \right)
\end{equation}
where $\eta(\cdot)$ represent a $\mathcal{K}$-function. 
By Theorem \ref{theorem_Rec}  and Assumption \ref{ASS1}, the closed-loop trajectory remains feasible and bounded. Therefore, since the lifting map $\bm{\Phi}(\cdot)$ is continuous on $\mathbb{X}$, both $\bm{\Phi}(\bm{x}_{k})$ and $\bm{z}^*_{0\mid k}$ remain in a compact subset of the feasible region of the MPC problem.

For the finite-horizon QP in \eqref{QP}, the optimal value function \(V(\cdot)\) is continuous and piecewise quadratic with respect to the feasible initial lifted state. Therefore, \(V(\cdot)\) is locally Lipschitz on this feasible region. Hence, there exists a constant \(L_1>0\) such that
\begin{equation}
\bigl| V\bigl(\bm{\Phi}(\bm{x}_k)\bigr) - V({\bm{z}}^*_{0|k}) \bigr| \leq L_1\eta\left ( \Vert \bm{e}_{[0,k]} \Vert\right )
\end{equation}

Now, consider the difference along the true state trajectory. According to \eqref{V1}, we have
\begin{equation}
\begin{aligned}
&V\bigl( \bm{\Phi} (\bm{x}_{k+1})\bigr)
 - V\bigl(\bm{\Phi} (\bm{x}_{k})\bigr)\\
&\quad\le
 \bigl|V\bigl(\bm{\Phi} (\bm{x}_{k+1})\bigr)
        - V({\bm{z}}^*_{0\mid k+1})\bigr|\\
&\quad\quad{}+
 \bigl(V({\bm{z}}^*_{0\mid k+1}) - V({\bm{z}}^*_{0\mid k})\bigr)+
 \bigl| V({\bm{z}}^*_{0| k})
        - V\bigl(\bm{\Phi} (\bm{x}_{k})\bigr)\bigr|\\
&\quad\leq -\alpha_3\bigl(\Vert {\bm{z}}^*_{0|k}\Vert\bigr) + \beta\bigl( \Vert\bm{e}_{k+1}\Vert\bigr) + L_1\eta\bigl(\Vert \bm{e}_{[0,k+1]} \Vert\bigr)\\
&\quad\quad{} + L_1\eta\bigl(\Vert \bm{e}_{[0,k]} \Vert\bigr)
\end{aligned}
\end{equation}
Since the supremum over $0\leq i \leq k$ is bounded by the supremum over $0\leq i \leq k+1$, it can be obtained that
\begin{equation} \label{V2}
\begin{aligned}
V\bigl( \bm{\Phi} &(\bm{x}_{k+1})\bigr) - V\bigl(\bm{\Phi} (\bm{x}_{k})\bigr) \\
&\leq - \alpha_3\bigl( \Vert{\bm{z}}^*_{0|k} \Vert\bigr) + 2L_1\eta\left ( \Vert \bm{e}_{[0,k+1]} \Vert \right ) + \beta\bigl( \Vert\bm{e}_{k+1}\Vert\bigr)
\end{aligned}
\end{equation}
Using the triangle inequality, it can be obtained from \eqref{eta}:
\begin{equation}
\|\bm{z}^*_{0\mid k}\| \geq \Vert \bm{\Phi} (\bm{x}_{k}) \Vert - \eta\left(\Vert \bm{e}_{[0,k]} \Vert\right)
\end{equation}
For small prediction error $\bm{e}_k$, the $\mathcal{K}_{\infty}$-function $\alpha_3(\cdot)$ can absorb this deviation, that is
\begin{equation}
\begin{aligned}
\alpha_3\bigl( \Vert{\bm{z}}^*_{0|k} \Vert\bigr) &\geq \alpha_3\left( \Vert \bm{\Phi} (\bm{x}_{k}) \Vert - \eta\left(\Vert \bm{e}_{[0,k]} \Vert\right) \right)\\
&\approx \alpha_3\bigl( \Vert \bm{\Phi} (\bm{x}_{k}) \Vert\bigr) - L_2\eta\left(\Vert \bm{e}_{[0,k]} \Vert\right)
\end{aligned}
\end{equation}
with a Lipschitz constant $L_2>0$. Thus, \eqref{V2} can be rewritten as
\begin{equation}
\begin{aligned}
V\bigl(\bm{\Phi}(\bm{x}_{k+1})\bigr)
- V\bigl(&\bm{\Phi}(\bm{x}_{k})\bigr)
\\ &\le
-\alpha_3\bigl(\Vert \bm{\Phi}(\bm{x}_k) \Vert\bigr)+\hat{\beta}\left(\Vert \bm{e}_{[0,k+1]} \Vert\right)
\end{aligned}
\end{equation}
with $\hat{\beta}(\cdot)=\beta(\cdot) + 2L_1\eta(\cdot)+L_2\eta(\cdot)$. Then \eqref{ISS_phi} can be obtained by standard ISS arguments, which completes the proof.
\end{IEEEproof}
This regional ISS property is appropriate for the present problem because the control objective is robust practical stabilization under bounded uncertainty, rather than global asymptotic regulation of an ideal disturbance-free system.

\section{Simulation Results} \label{Sec_Simulation}
This section presents numerical simulations that assess the efficacy of the proposed control strategy for stabilizing the attitude of the combined spacecraft in the presence of dynamic uncertainty.
\subsection{Simulation Platform and Parameters}
In this study, a comprehensive simulation framework was developed using the MATLAB/Simscape Multibody environment to model the attitude dynamics of the combined spacecraft system. For simplicity, the servicer is represented as a rigid cube-shaped satellite equipped with two three-link robotic arms, whereas the target is similarly modeled as a rigid cubesat equipped with a pair of solar panels and a docking ring.

The inertial moments of the servicer and target are respectively set as $\bm{J}_s=\text{diag}(405,405,405)\mathrm{kg}\cdot \mathrm{m^2}$, $\bm{J}_t=\text{diag}(36.8,37.5,36.8)\mathrm{kg}\cdot \mathrm{m^2}$\cite{liu2024attitude}.
The mass of the servicer and target are selected as $m_s = 1080\mathrm{kg}$ and $m_t = 75\mathrm{kg}$.
The inertia of the entire combined spacecraft is computed as
\[
\bm{J}_c=\begin{bmatrix}
    986.1 & 1 & -1\\
     1& 573.2 & -1\\
      -1 &  -1 & 1116
\end{bmatrix}\mathrm{kg}\cdot \mathrm{m^2}.
\]
The initial quaternion of the combined spacecraft is $\begin{bmatrix}0.5 & 0.5 & 0.5 & 0.5\end{bmatrix}^{\top}$, the initial angular velocity is $\begin{bmatrix}0.02 & -0.01 & -0.01\end{bmatrix}^{\top}\mathrm{rad/s}$. The desired quaternion and angular velocity are set as $\begin{bmatrix}1 & 0 & 0 & 0\end{bmatrix}^{\top}$ and $\begin{bmatrix}0 & 0 & 0\end{bmatrix}^{\top}\mathrm{rad/s}$, respectively. The constraint of control torques is set as $\vert u_{\text{max}}\vert\leq20\mathrm{N}\cdot \mathrm{m}$.

The weighting matrix of MPC are chosen as $\bm{Q}=10^2\textbf{I}_7$ and $\bm{R}=10^{-2}\textbf{I}_3$. The terminal weighting matrix $\bm{P}$ is solved by \eqref{opt_P} using MOSEK \cite{aps2019mosek}, which also yields the bound for the terminal set as $\sigma = 208.68$. The regularization coefficient is set as $\gamma_e=1e6$. The configuration time is set as 200$s$ and the simulation step is 0.1$s$.

To improve reproducibility, the main hyper-parameters of the proposed method are summarized in Table \ref{tab:hyperparameters}. In particular, \(N_\phi\) was selected according to the offline accuracy-complexity comparison in Fig. \ref{fig:Com_diff_model_err} and Table \ref{tab:perf_models}. The remaining parameters were chosen based on explicit trade-offs among adaptation speed, noise attenuation, and computational burden.

\begin{table}[htb]
  \centering
  \caption{PRACTICAL HYPER-PARAMETER SELECTION GUIDELINES AND VALUES}
  \label{tab:hyperparameters}
  \begin{tabular}{llll}
    \hline\hline
    \textbf{Parameter} & \textbf{Role} & \textbf{Main trade-off}  & \textbf{Value} \\
    \hline
    $N_{\phi}$ & \makecell[l]{Koopman \\truncation order}  & \makecell[l]{accuracy \\ vs. runtime}   & 5 \\
    
    $N_{w}$ & \makecell[l]{data window length}   & \makecell[l]{noise suppression \\ vs. adaptation speed}  & 10 \\
    
    $N$ & \makecell[l]{prediction horizon}   & \makecell[l]{performance \\ vs. online QP size}  & 15 \\
    
    $\lambda$ & \makecell[l]{forgetting factor}    & \makecell[l]{memory vs.\\ responsiveness}  & 0.965 \\
    
    $\rho$ & \makecell[l]{learning rate}    & \makecell[l]{convergence speed \\ vs. oscillation} & 1e-7 \\
    
    $\gamma$ & \makecell[l]{regularization \\ weight}    & \makecell[l]{robustness vs. bias}  & 1e-3 \\
    
    $\beta_1,\beta_2$ & \makecell[l]{Adam first/second \\ moment}    & \makecell[l]{standard optimizer \\ setting} & 0.9,0.999 \\
    
    \hline\hline
  \end{tabular}
\end{table}

All the simulations were performed on a laptop equipped with an Intel Core i7-12700H CPU, 32 GB of RAM, and an NVIDIA GeForce RTX 3060 GPU.
\subsection{Comparison Study of Nominal Koopman Model}
\begin{figure}[!htb]
	\centering
 \includegraphics[width=\linewidth]{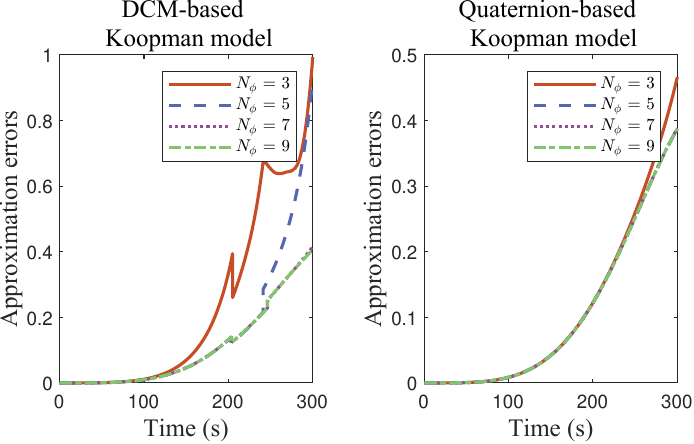}
        \caption{Comparison results of two nominal models. The proposed quaternion-based nominal model demonstrates higher model prediction accuracy.}
        \label{fig:Com_diff_model_err}
\end{figure}
Before validating the effectiveness of the control strategy, it is essential to verify the validity and computational performance superiority of the proposed quaternion-based nominal Koopman model. First, a comparative simulation is conducted against a DCM-based Koopman model. In the simulation, both models are driven by the same time-varying control torque, defined as:
\[
\bm{\tau}(t) = 0.05\begin{bmatrix}7.5 \sin(2t) + 7.5\sin(0.66t) \\ 2.5 \sin(2.2t+\frac{\pi}{2}) + 2.5\sin(0.22t) +1 \\ 5\sin(0.66t) +5\sin(1.52t)-1 \end{bmatrix}\mathrm{N \cdot m}
\]
The approximation error is defined as the 2-norm of the difference between the state predicted by the Koopman model and the true state of the nonlinear dynamics.
The results are illustrated in Fig. \ref{fig:Com_diff_model_err}. As depicted, increasing the truncation index $N_{\phi}$ reduces the approximation error for both modeling approaches. For the same truncation index $N_{\phi}$, the approximation errors of the proposed model are significantly smaller than those of the DCM-based model. This highlights the proposed model's higher fidelity in capturing the spacecraft's nonlinear attitude dynamics.

Next, the computational performance of the proposed model is compared with the DCM-based model and a data-driven model constructed using the EDMD method with Radial Basis Functions (RBFs). The EDMD model is purely data-driven, where the basis functions consist of the original 7 states (quaternion and angular velocity) and 50 RBFs, resulting in a 57-dimensional lifted state space. 

A closed-loop attitude stabilization simulation was performed for all three models. For simplicity, the control strategies of the three models are selected as conventional MPC without adaptation and the uncertainties is not considered, i.e., $\bm{J}_{c0} = \bm{J}_{c}$. 
As shown in Fig. \ref{fig:Com_diff_model}, all three methods can perform the attitude stabilization task. The DCM and quaternion-based models exhibit a faster and smoother response, whereas the RBF-based EDMD model results in a slower, more oscillatory transient, showing the higher model accuracy of the directly calculated model. 
The DCM and quaternion-based models provide a faster and smoother response. This superior performance highlights their higher fidelity compared to the data-driven RBF-EDMD approach, which yields a slower and more oscillatory transient.
It should be noted that the control torque from the DCM-based model is inverted because its error metric, defined using lifted states $\text{vec}(\bm{R})$, possesses dynamics that diverge from the true physical attitude error. As a result, the MPC generates a control torque with a direction opposite to the physically intuitive one required for stabilization.
\begin{figure}[!htb]
	\centering
	\includegraphics[width=\linewidth]{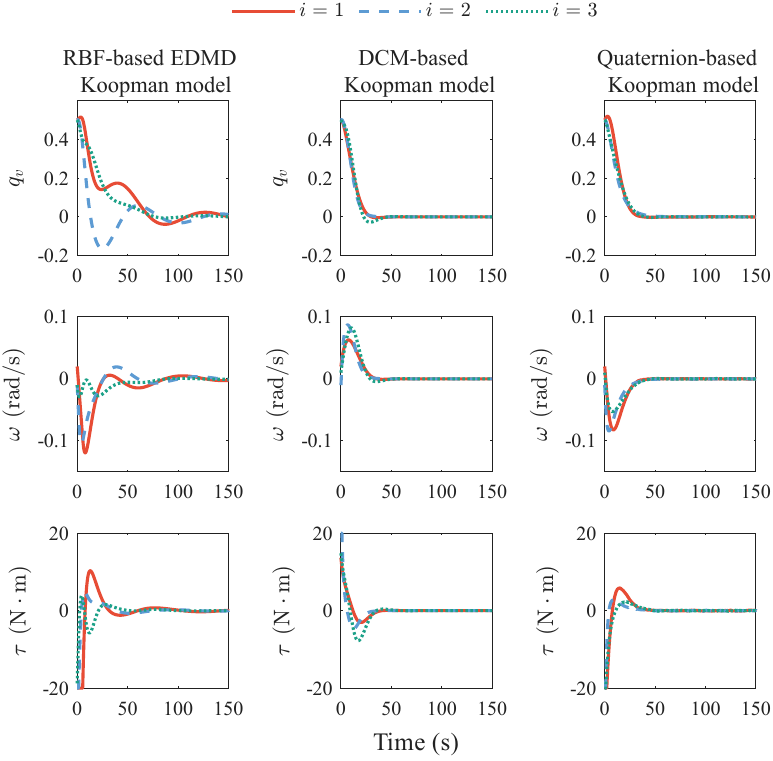}
        \caption{Comparison results of the considered nominal models. Taking initial quaternion $\left[  0.5 \ 0.5 \ 0.5 \ 0.5 \right]^{\top}$ and initial angular velocity $\left[ 0.02 \ -0.01 \ -0.01 \right]^{\top}\mathrm{rad/s}$ for example, all three methods successfully stabilize the attitude, while the analytically derived models (DCM and quaternion-based models) exhibit superior model accuracy.}
        \label{fig:Com_diff_model}
\end{figure}
To further evaluate and compare the robustness of the different models, a set of 50 Monte Carlo simulations were conducted.
In each run, the system was initialized with a random attitude and a random angular velocity (in the range of ±0.02 rad/s)
Figure \ref{fig:boxplot} presents the resulting box plots for two key metrics: settling time (the time required to enter and remain within a $5\%$ error margin) and total control effort $( \int\Vert  \bm{\tau} \Vert_2^2\mathrm{d}t )$. It shows that the analytical models (DCM-based and Quaternion-based) vastly outperform the data-driven RBF-EDMD model. They are not only significantly faster and more efficient, but their tightly clustered results also demonstrate their highly consistent performance and strong robustness.
\begin{figure}[!htb]
	\centering
	\includegraphics[width=\linewidth]{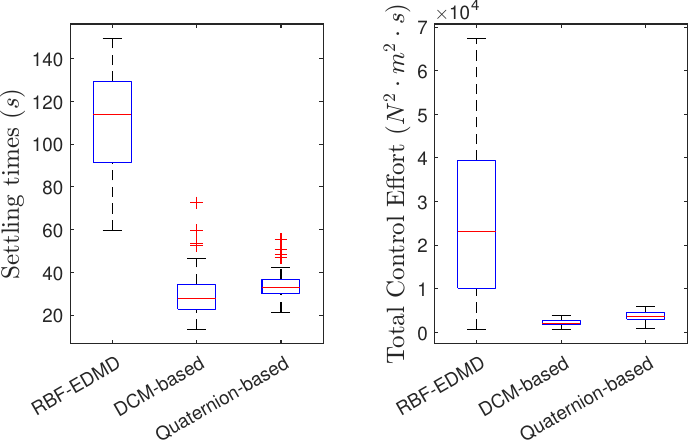}
        \caption{Box plots for settling time and total control effort of the considered nominal models. Both the DCM-based and Quaternion-based methods achieve significantly faster settling times and lower control effort compared to the RBF-EDMD model. The tight distributions for the analytical models also demonstrate their greater robustness and predictability.}
        \label{fig:boxplot}
\end{figure}

\begin{table*}[ht]
  \centering
  \caption{Computational Performance Metrics of the Two Models}
  \label{tab:perf_models}
  \begin{tabular}{llllll}
    \hline\hline
    \textbf{Model} & \textbf{\makecell[l]{Model truncation \\ index}} & \textbf{\makecell[l]{Number of \\ lifted states}} & \textbf{\makecell[l]{Simulation \\ runtime (s)}} & \textbf{\makecell[l]{Average step \\ runtime (s)}}  & \textbf{Memory usage} (KB) \\
    \hline
    \multirow{2}{*}{Quaternion-based model} 
    & 5 & 27   & \textbf{16.85}   & \textbf{0.011} & \textbf{13368}   \\
    & 7   & 35   & \textbf{18.30}   & \textbf{0.012} & \textbf{13415}   \\
    \hline
    \multirow{2}{*}{DCM-based model \cite{Chen2023}} 
    & 5 & 45   & 24.81   & 0.017 & 14731   \\
    & 7 & 63   & 36.95   & 0.024 & 15620   \\
    \hline
    RBF-based EDMD model \cite{korda2018linear} & / & 57   & 27.25   & 0.018 & 17418   \\
    \hline\hline
  \end{tabular}
\end{table*}

Table \ref{tab:perf_models} records the computational performance metrics of the three models in terms of simulation runtime, average step runtime, and memory usage. Note that the "Simulation runtime" represents the wall-clock time cost of the simulation, and "Memory usage" represents the peak RAM, which can be obtained via MATLAB Profiler. 
The results indicate that the proposed quaternion-based model is significantly more efficient than the other two models, both in terms of computation time and memory usage. 
Furthermore, when the model fidelity is increased by raising $N_{\phi}$, the simulation runtime for the proposed model increases more slowly than that of the DCM-based model.
This difference arises because, for the same value of $N_{\phi}$, the DCM-based model requires more lifted states to represent the original nonlinear dynamics, leading to greater computational overhead.

\subsection{Adaptive Attitude Stabilization} \label{Section-V-C}
In this subsection, we first evaluate the proposed PAKMPC under a representative post-capture attitude-stabilization scenario and compare it with non-adaptive MPC, KMAE-AKMPC \cite{chen2024learning}, and Incremental Nonlinear MPC (INMPC) \cite{chen2025incremental}. This subsection focuses on the transient and steady-state behavior of the four controllers under a fixed operating condition.
The KAME-AKMPC strategy utilizes a linear MPC based on a nominal model, which is initially established using the EDMD method and then continuously updated online via a Recursive Least Squares (RLS) algorithm. The INMPC strategy employs a nonlinear MPC framework that operates directly on an incremental, data-driven model constructed from system data.

For the nominal Koopman model, the truncation index is selected as $N_{\phi}=5$, then a $4N_{\phi}+7=27$-dimensional linear model is obtained. The nominal inertia of combined systems is set as that of the servicer, i.e., $\bm{J}_{c0} = \bm{J}_{s}=\mathrm{diag}(405,405,405)\mathrm{kg}\cdot \mathrm{m^2}$. The active attitude control law for the target is formulated in the form of a PD controller:
\begin{equation}
\bm{u}_{t}=-\bm{K}_{pt}\bm{q}_{vt} - \bm{K}_{dt} \bm{\omega}_{t}
\end{equation}
where $\bm{K}_{pt} = 0.02\bm{J}_t$, $\bm{K}_{dt} = 0.05\bm{J}_t$, $\bm{q}_{vt}$ denotes the vector part of initial quaternion of the target and $\bm{\omega}_{t}$ denotes the angular velocity of the target.
To reflect implementation uncertainty, additive external disturbance torques and measurement noise are included in the comparative simulations. Specifically, the disturbance torque is modeled as $\bm{\tau}_d(t)=\begin{bmatrix}
0.1\sin(0.075 t)\ 
-0.2\sin(0.1 t)\ 
0.15\sin(0.1 t)\end{bmatrix}^{\top}\mathrm{N} \cdot \mathrm{m}$, and the state measurements $\bm{q}_v$ and $\bm{\omega}$ are corrupted by additive zero-mean Gaussian white noise following $\mathcal{N}(\mathbf{0},10^{-4}\mathbf{I}_3)$.

The adaptive-learning hyper-parameters used in this subsection follow the practical selection guidelines summarized in Table \ref{tab:hyperparameters}.
\begin{figure*}[!htb]
    \centering 
    \subfloat[]{
        \centering
		\includegraphics[width=0.2\textwidth]{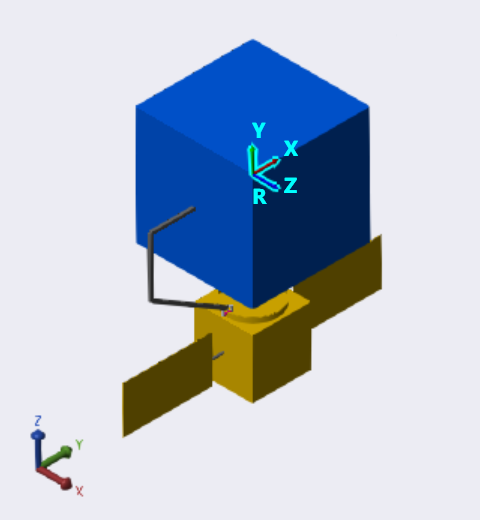}}
    \hfill
    \subfloat[]{
			\centering
			\includegraphics[width=0.2\textwidth]{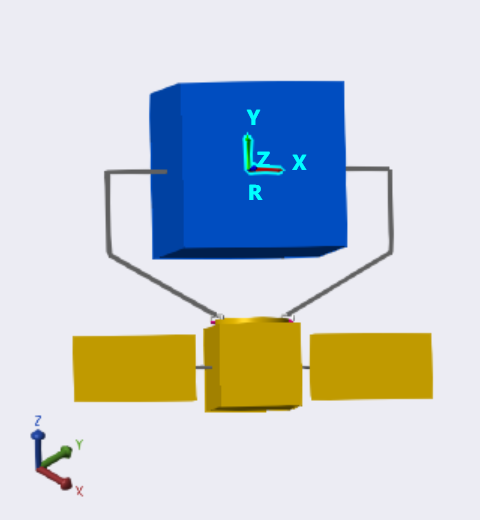}}
    \hfill
    \subfloat[]{
			\centering
			\includegraphics[width=0.2\textwidth]{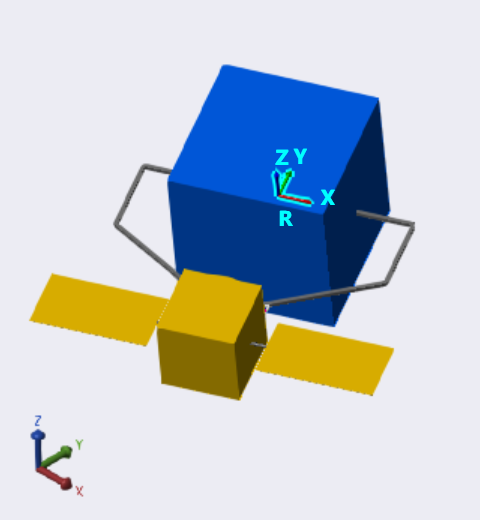}}
    \hfill
    \subfloat[]{
			\centering
			\includegraphics[width=0.2\textwidth]{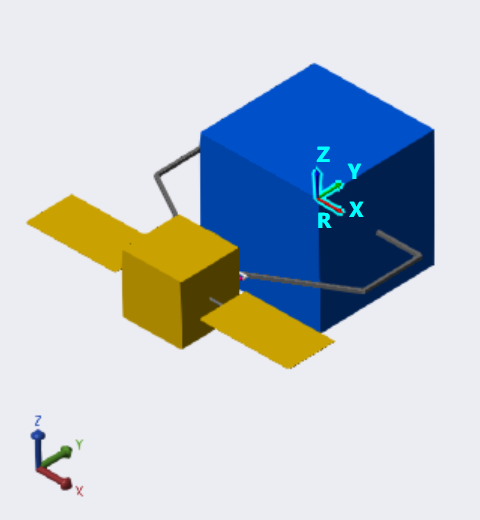}}
    \caption{Simulation snapshots of the configuration of the combined spacecraft under the proposed control strategy. The subfigures show the mission timeline: (a) $t = 0$s (initial states). (b) $t = 10$s. (c) $t = 20$s. (d) $t = 50$s (substantial stabilization). It can be observed that the attitude stabilization task has been effectively achieved at $t = 50$s.}
    \label{fig:satellite_snapshots}
\end{figure*}
\begin{figure}[!htb]
	\centering
	\includegraphics[width=\linewidth]{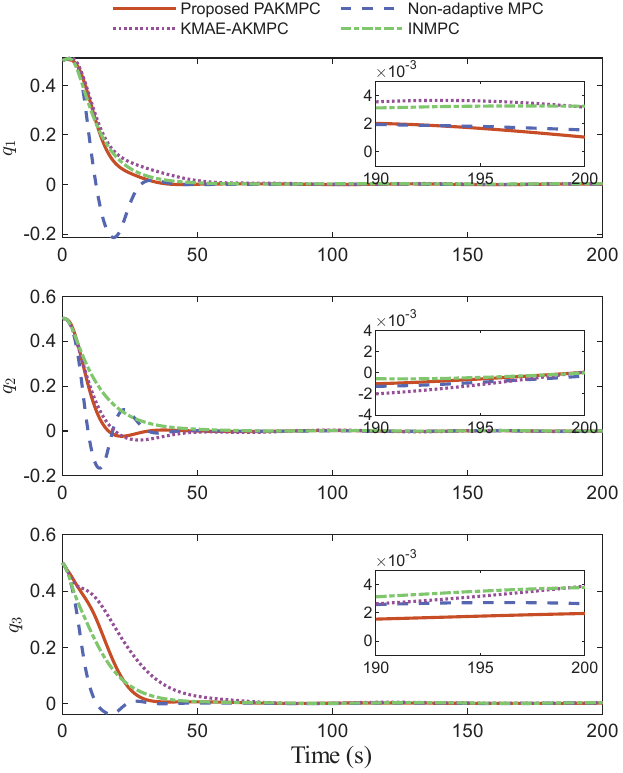}
        \caption{The time responses of $\bm{q}_v$ under different control strategies. The proposed PAKMPC achieves a convergence speed comparable to INMPC and faster than KMAE-AKMPC, while avoiding the significant overshoot seen with the non-adaptive MPC. The magnified subplots of the steady-state response highlight that the proposed PAKMPC also achieves the smallest steady-state error.}
        \label{fig:q}
\end{figure}
\begin{figure}[!htb]
	\centering
	\includegraphics[width=\linewidth]{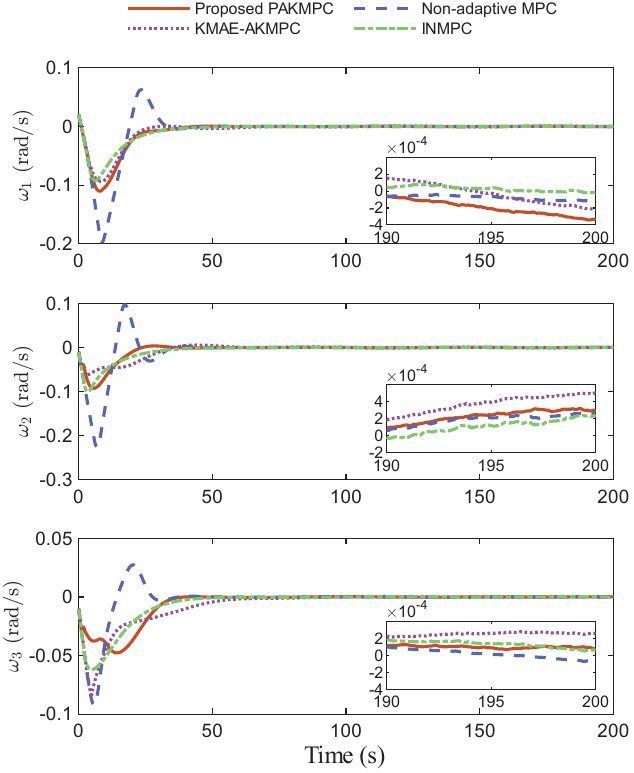}
         \caption{The time responses of $\bm{\omega}$ under different control strategies. Similar to Fig. \ref{fig:q}, the proposed PAKMPC and the INMPC perform better.}
        \label{fig:omega}
\end{figure}

\begin{figure}[!htb]
	\centering
	\includegraphics[width=\linewidth]{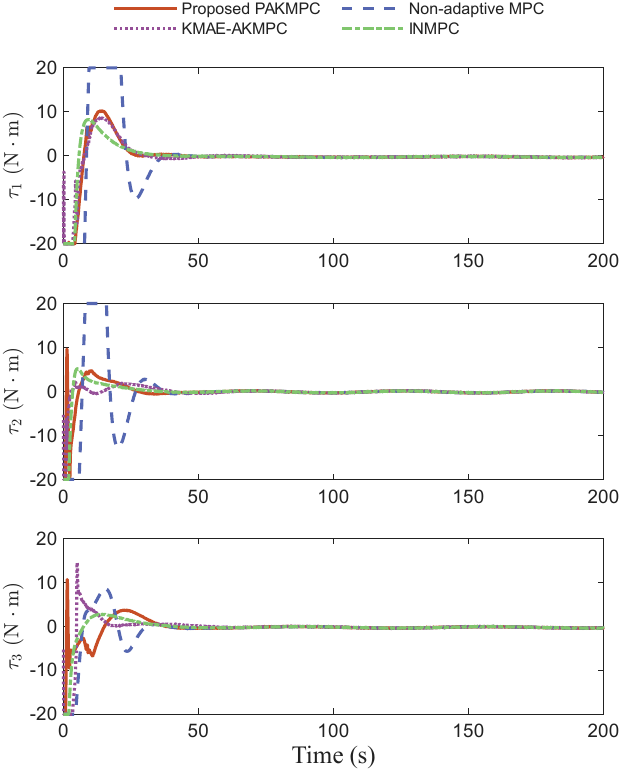}
        \caption{The time responses of control torques $\bm{\tau}$. 
        The non-adaptive MPC suffers from severe control saturation due to the uncompensated uncertainty.}
        \label{fig:tau}
\end{figure}

\begin{figure}[!htb]
	\centering
	\includegraphics[width=\linewidth, trim={0cm 1cm 0cm 0cm}, clip]{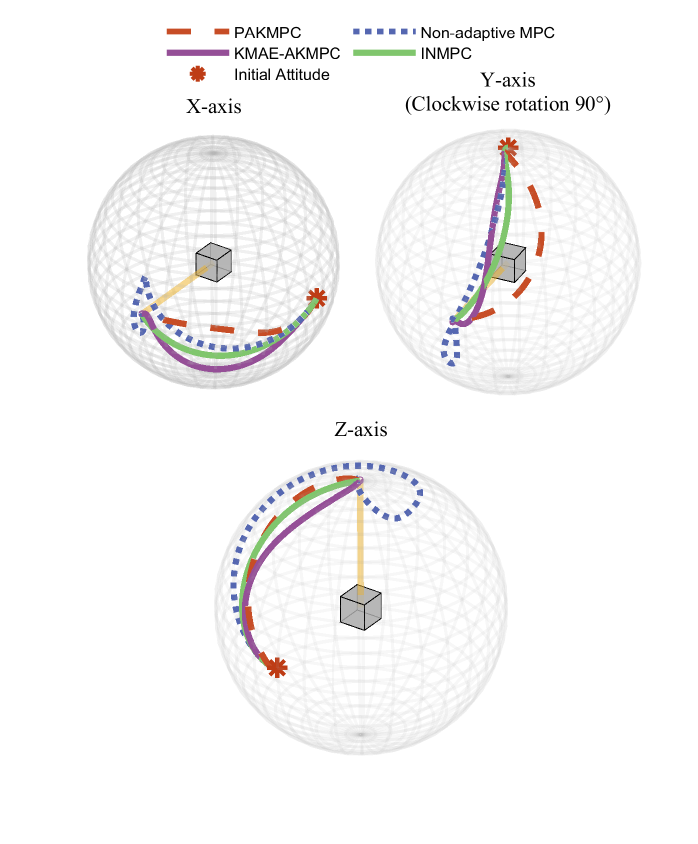}
        \caption{3D attitude trajectories of the combined spacecraft.}
        \label{fig:3d_traj}
\end{figure}

The snapshots of Fig. \ref{fig:satellite_snapshots} provide an overview of the configuration of the combined spacecraft under the proposed PAKMPC control strategy, illustrating the successful execution of the entire maneuver from the initial states to final stabilization.
The time responses of attitude states and control torques are plotted in Figs. \ref{fig:q} - \ref{fig:tau}.
As shown in Fig. \ref{fig:q} and \ref{fig:omega}, all the control strategies are capable of achieving attitude stabilization of the combined system under dynamic uncertainty. Notably, the proposed PAKMPC achieves faster attitude stabilization than KMAE-AKMPC, while the non-adaptive MPC appears to converge most rapidly but exhibits severe overshoot. This behavior is more clearly visualized in the 3D attitude trajectories plotted in Fig. \ref{fig:3d_traj}, where the significant overshoot from the non-adaptive MPC's path is clearly observable. As Fig. \ref{fig:tau} reveals, the faster coverage speed is produced by a larger and saturated control torque. The proposed PAKMPC achieves the lowest state error among all methods, with accuracy comparable to the Non-adaptive MPC. 

\begin{remark}
The apparently fast convergence of the non-adaptive MPC is achieved at the expense of much larger actuation, as also seen from the sustained saturation tendency in Fig. \ref{fig:tau}. In other words, its seemingly competitive steady-state accuracy is obtained by excessive control effort rather than by a better prediction model.
\end{remark}

\begin{table*}[htb]
  \centering
  \caption{Comprehensive performance evaluation of different strategies}
  \label{tab:perf_methods}
  \begin{tabular}{llllll}
    \hline\hline
    \textbf{Strategy} & \makecell{\textbf{Simulation} \\ \textbf{runtime} (s)}& \makecell{\textbf{Average step} \\ \textbf{runtime} (s)}& \makecell{\textbf{Steady-state} \\ \textbf{error} $\Vert\bm{q}_{v}\Vert$} & \makecell{\textbf{Steady-state} \\ \textbf{error} $\Vert\bm{\omega}\Vert$} & \makecell[l]{\textbf{Control effort} \\ $( \int\Vert  \bm{\tau} \Vert_2^2\mathrm{d}t )$}\\
    \hline
    PAKMPC & \textbf{20.11}   & \textbf{0.010} & $\bm{2.1e^{-3}}$  & {$2.2e^{-4}$} & {$4.8e^{4}$}\\
    Non-adaptive MPC & 18.79   & 0.009 & {$2.8e^{-3}$} & {$2.1e^{-4}$} & {$1.8e^{5}$}\\
    KMAE-AKMPC & 33.87   & 0.017 & {$3.4e^{-3}$} & {$5.3e^{-4}$} & {$4.8e^{4}$}\\
    INMPC & 70.49   & 0.035 & {$3.9e^{-3}$} & {$2.3e^{-4}$} & {$4.7e^{4}$}\\
    \hline\hline
  \end{tabular}
\end{table*}

Table \ref{tab:perf_methods} summarizes the computational and control performance of the considered strategies under sensor noise and external disturbances. The proposed PAKMPC maintains an average step runtime of 0.010 s, which remains well below the sampling period of 0.1 s and is significantly lower than those of KMAE-AKMPC (0.017 s) and INMPC (0.035 s). In terms of control accuracy, PAKMPC achieves the smallest steady-state attitude error among all compared strategies, with $\Vert\bm{q}_{v}\Vert=2.1e^{-3}$. Its steady-state angular-velocity error $\Vert\bm{\omega}\Vert=2.2e^{-4}$ remains comparable to the best-performing baselines under the same disturbance and noise conditions. In addition, the required control effort is substantially lower than that of the non-adaptive MPC, indicating that the proposed method preserves good robustness and efficiency without relying on excessively large control inputs.
\begin{figure}[t]
	\centering
	\includegraphics[width=\linewidth]{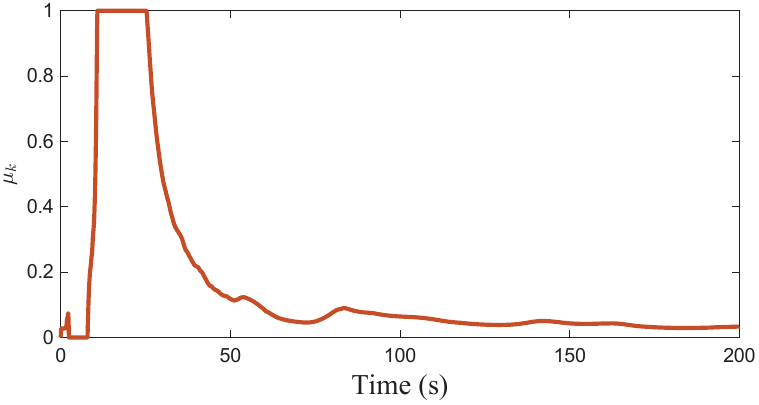}
         \caption{The time response of $\mu_k$.}
          \label{fig:mu}
\end{figure}

\begin{figure}[t]
	\centering
	\includegraphics[width=\linewidth]{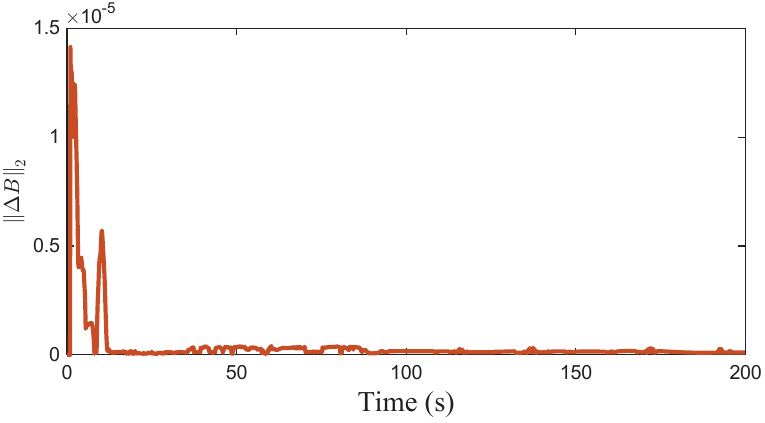}
         \caption{The norm of $\Delta\bm{B}$. }
          \label{fig:del_B_norm}
\end{figure}
Fig. \ref{fig:mu} shows the time responses of the interpolation variable $\mu_k$. The dynamic behavior of $\mu_k$ demonstrates an adaptive weighting between prediction and measurement. Initially, large state errors push $\mu_k$ to its limit of 1. As the system converges, $\mu_k$ smoothly decreases, reflecting an increasing confidence in the model's predictive accuracy.
Fig. \ref{fig:del_B_norm} shows the norm of the uncertainty matrix $\Delta\bm{B}$ at each time step. With the formation of the first complete data window, the norm of $\Delta\bm{B}$ rises rapidly and oscillates as the optimizer seeks the optimal solution, before finally converging.

\subsection{Structured Robustness Evaluation} \label{Section-V-D}
To further assess the robustness and generality of the controller, we conducted a structured parameter sweep while keeping the initial condition, reference command, prediction horizon, and controller hyperparameters fixed. In particular, the target inertia and mass were jointly scaled by factors in {50\%, 75\%, 100\%, 125\%, 150\%}, where 100\% denotes the nominal post-capture target model. Meanwhile, the disturbance torque profile was scaled by factors in {0\%, 50\%, 100\%, 150\%, 200\%}, where 0\% corresponds to the disturbance-free case and 100\% corresponds to the nominal disturbance amplitude.

Fig. \ref{fig:heatmap} compares all four controllers over the same operating range in terms of steady-state quaternion error and total control effort, while Fig. \ref{fig:scatter} summarizes the runtime-performance trade-off of all four controllers over the same set of operating points. The results show that PAKMPC is consistently more robust than the other three baseline metohds and achieves the most favorable overall balance between computational cost and control performance.

\begin{figure}[t]
    \centering
    \subfloat[]{
	   \centering
	       \includegraphics[width=\linewidth]{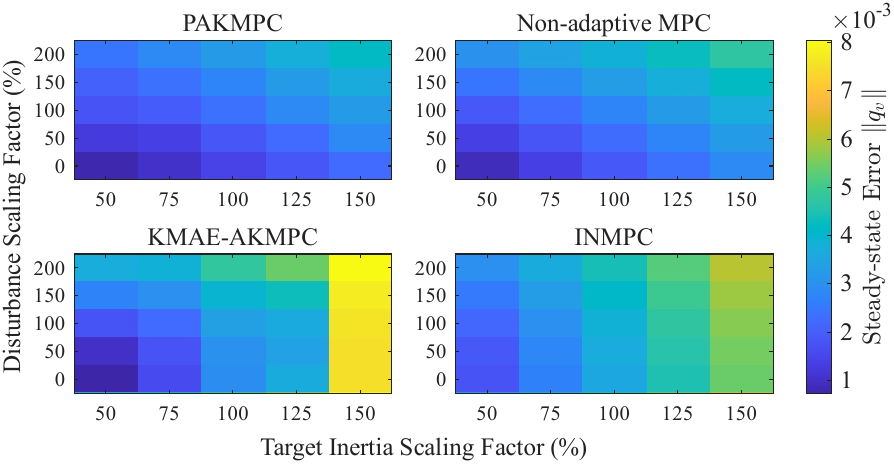}}\\
    \subfloat[]{
	   \centering
	       \includegraphics[width=\linewidth]{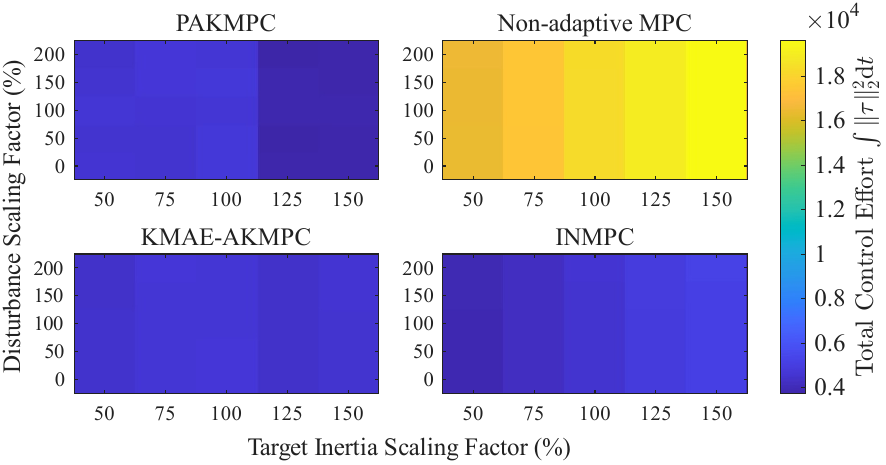}}
    \caption{Structured robustness heatmaps over the uncertainty-disturbance sweep. (a) Steady-state quaternion-vector error $\bm{q}_v$ for PAKMPC, non-adaptive MPC, KMAE-AKMPC, and INMPC. (b) Total control effort over the same operating grid. Across the scanned operating range, PAKMPC shows the most favorable combination of low steady-state attitude error and moderate control effort.}
    \label{fig:heatmap}
\end{figure}

\begin{figure}[t]
\centering
\includegraphics[width=\linewidth]{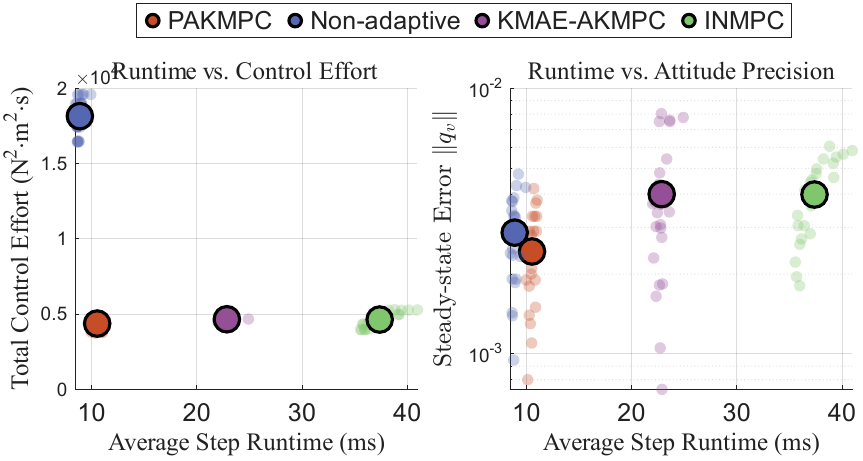}
  \caption{Runtime-performance scatter plots over the structured parameter sweep. The left panel shows runtime versus control effort, and the right panel shows runtime versus steady-state quaternion error. PAKMPC is located closest to the lower-left region, indicating the best overall trade-off among the compared controllers.}
  \label{fig:scatter}
\end{figure}

Although the present results are obtained for a combined-spacecraft scenario, the proposed framework can also be extended to more experimentally accessible platforms. For UAV attitude control, the quaternion-based rigid-body formulation makes the proposed analytical lifting readily adaptable, while the online adaptation can compensate for aerodynamic disturbances and inertia variations \cite{seshasayanan2024robust}. For robotic manipulators, the quaternion-based lifting is not directly reused, but the same physics-based lifting and adaptive Koopman-MPC architecture can be reformulated to address payload-dependent inertia uncertainties and nonlinear couplings \cite{tong2023adaptive, muralidharan2025ground}.

\section{Conclusion} \label{Sec_Conclusion}
In this paper, we proposed a Koopman operator theory-based online adaptive MPC strategy for attitude stabilization of the combined spacecraft with inertial uncertainty. 
The strategy is founded on a physics-based nominal Koopman model, which achieves a structurally concise $(4N_{\phi}+7)$-dimensional linear representation of the nonlinear attitude dynamics, and avoids the higher dimensionality of DCM-based models ($9(N_{\phi}+1)$ dimensions) and EDMD models (up to hundreds of dimensions).
This physics-based model is then augmented by a data-driven recursive online update law, forming a Koopman modeling framework that is both adaptive and physically interpretable. The modeling framework identifies inertial uncertainty and provides precise predictions for the LMPC scheme.
Furthermore, a terminal cost and an invariant set were designed to formally guarantee input-to-state stability and recursive feasibility. The superiority of the overall control framework is demonstrated in comparative high-fidelity simulations, showcasing high computational efficiency and rapid convergence.

It should also be noted that the proposed framework has been validated in a high-fidelity simulation environment, and its extension to other platforms should be understood in a system-dependent sense. The compact Koopman structure developed in this paper is derived from rigid-body quaternion attitude dynamics. Therefore, when applying the same methodology to real platforms such as UAVs or robotic manipulators, additional unmodeled effects, including actuator bandwidth limits, communication delays, flexible modes, and contact forces, may require an augmented lifted model rather than the direct reuse of the present compact structure.
Correspondingly, the terminal set should be constructed in the augmented lifted state space, or replaced by a robust positively invariant/tube-based terminal set when residual uncertainty is treated as a bounded disturbance. In this sense, the present study demonstrates the effectiveness of the proposed physics-based adaptive Koopman MPC for uncertain combined-spacecraft attitude stabilization. Its extension to other platforms with more complex unmodeled dynamics will be investigated through hardware-in-the-loop and experimental validations in future work.

\bibliographystyle{IEEEtran}
\bibliography{bibliography} 

\end{document}